\def\ps@pprintTitle{%
	\let\@oddhead\@empty
	\let\@evenhead\@empty
	\def\@oddfoot{\reset@font\hfil\thepage\hfil}
	\let\@evenfoot\@oddfoot
}
\newtheorem{thm}{Theorem}[section]
\theoremstyle{definition}
\theoremstyle{remark}
\newtheorem{rem}{Remark}[section]
\numberwithin{equation}{section}
\useunder{\uline}{\ul}{}
\newlist{steps}{enumerate}{1}
\setlist[steps, 1]{label = Step \arabic*:}
\newlist{notes}{enumerate}{1}
\setlist[notes]{label=Note: ,leftmargin=*}
\begin{document}
\begin{frontmatter}
	
	
	
	\title{\textbf{Bayesian estimation for novel geometric INGARCH model}}
			\author[label1]{Divya Kuttenchalil Andrews\corref{cor1}}
		\ead{divyaandrews5@gmail.com}
		\author[label2]{N. Balakrishna}
		\address[label1]{Cochin University of Science and Technology, Kochi, India.}
		\address[label2]{Indian Institute of Technology, Tirupati, India.}
	\author{}
	\address{}
	\begin{abstract}
		This paper introduces an integer-valued generalized autoregressive conditional heteroskedasticity (INGARCH) model based on the novel geometric distribution and discusses some of its properties. The parameter estimation problem of the models are studied by conditional maximum likelihood and Bayesian approach using Hamiltonian Monte Carlo (HMC) algorithm. The results of the simulation studies and real data analysis affirm the good performance of the estimators and the model. Forecasting using the Bayesian predictive distribution has also been studied and evaluated using real data analysis.\\
		
		\noindent Keywords:	Bayesian inference; Hamiltonian Monte Carlo; INGARCH; count time series.\\\\

	\end{abstract}
\end{frontmatter}

\section{Introduction}

\noindent 
Count time series are often encountered in practical applications, particularly in fields such as insurance, healthcare, epidemiology, queuing models, communications, reliability studies, and meteorology. Examples of such counts include the number of patients, crime incidents, transmitted messages etc. Several models have been designed to analyze count time series, focusing on their marginal distributions and autocorrelation structures. These models are typically divided into two broad categories: models based on the 'thinning' operator and those of the regression type. In this paper, we consider the latter class of observation-driven models: the integer-valued generalized autoregressive conditional heteroscedastic (INGARCH) models introduced by \cite{heinen2003modelling} and \cite{ferland2006integer} and defined as follows:
\begin{equation}
	\label{pingarch}
	\left\{\begin{array}{l}
		X_t \mid \mathscr{F}_{t-1}:  Poisson(\lambda_t),  \\
		\lambda_t=\alpha_0+\sum_{i=1}^p\alpha_i X_{t-i}+\sum_{j=1}^{q}\beta_j \lambda_{t-j},
	\end{array}\right.
\end{equation}
where	$\mathscr{F}_{t-1}$ is the $\sigma$ - field generated by $\{X_{t-1}, X_{t-2},\ldots\}$,  $\alpha_0 > 0$, $\alpha_i \geq 0$, $\beta_j \geq 0$, for $i=1,\ldots p$, $j = 1, \ldots q$, $p \geq 1$, $q \geq 1$.  If $q=0$, (\ref{pingarch}) is referred to as INARCH(p) (or INGARCH(p,0)) model. The term INGARCH was introduced by \cite{ferland2006integer} due to its similarity to the classical GARCH model proposed by \cite{bollerslev1986generalized}:
\begin{equation} 
	\left\{\begin{array}{l}
		X_t =  \sigma_t.\varepsilon_t,  \\
		\sigma_t^2=\alpha_0+\sum_{i=1}^p\alpha_i X_{t-i}^2+\sum_{j=1}^{q}\beta_j \sigma_{t-j}^2,
	\end{array}\right. \nonumber
\end{equation}
where $\sigma_t^2 = Var[X_t \mid \mathscr{F}_{t-1}]$ and $\{\varepsilon_t\}$ is a sequence of white noise with mean 0 and variance 1. Further, $\{X_s\}$ is independent of $\{\varepsilon_t\}$ for every $s<t$. The model (\ref{pingarch}) is referred to as the Poisson INGARCH (PINGARCH) model  where $\lambda_t = E[X_t\mid \mathscr{F}_{t-1}]$. Eventhough $X_t$ conditioned on $\mathscr{F}_{t-1}$ follows the equidispersed Poisson distribution with mean $\lambda_t$, it can model overdispersed counts (See \cite{weissovrdisp}):
\begin{equation}\nonumber
	\mu_t = E[X_t] = E[\lambda_t], \quad Var[X_t] = \mu_t + Var[\lambda_t] > \mu_t.
\end{equation}
\cite{ferland2006integer} showed that if $\sum_{i=1}^p \alpha_i + \sum_{j=1}^q \beta_j <1$ is satisfied, the INGARCH process exists and is strictly stationary, with finite first and second - order moments. Then, the unconditional mean of $X_t$ is given by
\begin{equation}
	\mu = \frac{\alpha_0}{1- \sum_{i=1}^p\alpha_i - \sum_{j=1}^q \beta_j}. 
\end{equation}
Further properties on variance and autocovariances as well as an in-depth review of the purely autoregressive case i.e PINARCH(p) (or PINGARCH(p,0) ) was conducted by \cite{weiss2009modelling}. 
\cite{zhu2011negative} proposed the negative binomial INGARCH (NB-INGARCH) model of the form:
\begin{equation}
	\label{nbgarch}
	\left\{\begin{array}{l}
		X_t \mid \mathscr{F}_{t-1}:  NB (n,p_t),  \\
		\frac{1-p_t}{p_t}=\lambda_t=\alpha_0+\sum_{i=1}^p\alpha_i X_{t-i}+\sum_{j=1}^{q}\beta_j \lambda_{t-j},
	\end{array}\right.
\end{equation}
where	$\mathscr{F}_{t-1}$ , $\alpha_0 ,\alpha_i ,\beta_j $, reprise the definition and conditions in (\ref{pingarch})  respectively. The parameter $n$ is considered to be fixed, while $p_t$ varies with time and $p_t = 1/(1+\lambda_t)$. The conditional mean and variance respectively are
\smaller
\begin{equation}
	E[X_t\mid \mathscr{F}_{t-1}] = \frac{n(1-p_t)}{p_t} = n\lambda_t, 
	\; \textrm{and} \;
	Var[X_t\mid\mathscr{F}_{t-1}] = \frac{n(1-p_t)}{p_t^2} = n \lambda_t(1+\lambda_t). 
\end{equation}
\normalsize
An alternate version of the negative binomial INGARCH was proposed by \cite{xu2012model} with the conditional distribution $NB (n_t,p)$ with $n_t = \lambda_t\frac{p}{1-p}$ and with the conditional mean $\lambda_t$ satisfying right hand side of (\ref{nbgarch}). In this article, we consider the model by  \cite{zhu2011negative} for data analysis and comparison. \cite{zhu2012modeling} defined the generalized Poisson (GP - INGARCH) model as
\begin{equation}
	\label{gpingarch}
	\left\{\begin{array}{l}
		X_t \mid \mathscr{F}_{t-1}:  GP(\eta_t,\kappa),  \\
		\frac{\eta_t}{1-\kappa}=\lambda_t=\alpha_0+\sum_{i=1}^p\alpha_i X_{t-i}+\sum_{j=1}^{q}\beta_j \lambda_{t-j},
	\end{array}\right.
\end{equation}
where $p \geq 1$, $q \geq 1$, $\alpha_0 >0$, $max(-1,-\eta _t/4) < \kappa < 1$, $\alpha_i , \beta_j \geq 0, \; i=1,\ldots,p, \; j= 1,\ldots,q$, and $\mathscr{F}_{t-1}$ is the past information. 
The probability mass function (pmf) of the generalized Poisson distribution ($GP(\eta,\kappa)$) is defined as
\begin{equation}
	Pr[X=x]=\begin{cases}
		\eta (\eta + \kappa x)^{x-1}e^{-(\eta + \kappa x)}/x!,	 \;x = 0,1,\ldots ,\\
		0,\quad \quad  \quad \quad \quad \quad \quad \quad \quad \textrm{for} \;x > m \; \textrm{if} \;  \kappa <0, \nonumber
	\end{cases} 
\end{equation}
where $\eta >0$, $max(-1,-\eta/m) < \kappa <1$, and  $m (\geq 4)$ is the largest positive integer for which $\eta + \kappa m >0 $ when $\kappa < 0$. It reduces to the usual Poisson with parameter $\eta$ when $\kappa = 0$. Note that when $\kappa < 0$, the distribution is truncated because $Pr[X = x] = 0$ for all $x > m$, making the sum $\sum_{x=0}^{m} Pr[X = x]$ slightly less than 1. However, this truncation error is less than $0.5\%$ when $m \geq 4$, so it is negligible in practical applications \citep{consul2006lagrangian}. The respective conditional mean and conditional variance of $\{X_t\}$  following (\ref{gpingarch}) are
\begin{equation}
	E[X_t\mid\mathscr{F}_{t-1}]= \frac{\eta_t}{1-\kappa} = \lambda_t, 
	\; \textrm{and} \;
	Var[X_t\mid\mathscr{F}_{t-1}] = \frac{\eta_t}{(1-\kappa)^3} = \frac{\lambda_t}{(1-\kappa)^2} .
\end{equation}
Detailed discussions and review of INGARCH models can be found in \cite{weiss2018introduction} and \cite{liu2023systematic}. \par
In the following section, we introduce the novel geometric INGARCH (NoGe-INGARCH) model and discuss some of its properties. The proposed NoGe-INGARCH model is novel in its synthesis of a flexible, zero-inflated count distribution with an autoregressive conditional mean structure tailored for time series. While the literature on count time series includes Poisson, negative binomial, and generalized Poisson INGARCH variants, this is, to our knowledge, the first work to embed a structurally zero-inflated geometric model within an INGARCH framework. In addition to modeling innovation, our work contributes a computational advance by implementing full Bayesian inference via \textit{Hamiltonian Monte Carlo (HMC)} using the \textit{Stan} probabilistic programming language, enabling efficient posterior sampling even for complex, constrained parameter spaces. We emphasize graphical diagnostics such as posterior trace plots, autocorrelation functions of residuals, and probability integral transform (PIT) histograms---tools that are essential for interpreting model adequacy and convergence in a Bayesian setting.

\section{Novel geometric INGARCH model}\label{nogeingarch}
The novel geometric distribution is defined by the pmf: 
\begin{equation}
	\label{eq1}
	Pr[X=x] =\delta \phi+ (1-\delta)(1-\phi) \left(1-\theta\right)^{x-1}\theta, \quad x \in \mathbb{N}_0, 
\end{equation}
where  $0<\phi<1$ and $0<\theta \leq 1$. The indicator function $\delta$  assumes the value $1$ when $x=0$ and $0$ otherwise. The term ``novel geometric" is carried forward from \cite{doi:10.1080/00949655.2023.2213794}. We propose the NoGe-INGARCH  model :
\begin{equation}
	\label{eq2}
	\left\{\begin{array}{l}
		X_t \mid \mathscr{F}_{t-1}:  NoGe\left(\theta_t, \phi \right),  \\
		\frac{1-\phi}{\theta_t} = \lambda_t=\alpha_0+\sum_{i=1}^p\alpha_i X_{t-i}+\sum_{j=1}^{q}\beta_j \lambda_{t-j},
	\end{array}\right.
\end{equation}
where $\alpha_0 > 0$, $\alpha_i \geq 0$, $\beta_j \geq 0$, for $i=1,\ldots p$, $j = 1, \ldots q$, $p \geq 1$, $q \geq 1$. 

The motivation for employing the novel geometric (NoGe) distribution in the INGARCH framework stems from its capacity to handle both \textit{overdispersion} and \textit{zero inflation}, which are frequently observed in real-world count data. Unlike the Poisson distribution, which assumes equidispersion, or the standard geometric distribution, which lacks explicit control over structural zeros, the NoGe distribution introduces a point mass at zero through the parameter $\phi$. This added flexibility makes it particularly suitable for datasets such as epidemiological counts or number of insurance claims, where excessive zeros occur not merely due to chance but as an inherent structural feature. The distribution retains mathematical tractability while providing a more nuanced model of dispersion and data sparsity. Moreover, the conditional moments derived from the NoGe-INGARCH model maintain closed forms, thereby facilitating both theoretical analysis and computational implementation.

The conditional mean and conditional variance of a NoGe-INGARCH(p,q) process $\{X_t\}$ are:
\smaller
\begin{equation}
	\label{eqr}
	E[X_t\mid\mathscr{F}_{t-1}] = \frac{1-\phi}{\theta_t} = \lambda_t ,\; \textrm{and} \; Var[X_t\mid \mathscr{F}_{t-1}] = \frac{1-\phi}{\theta_t}\left(\frac{1+\phi}{\theta_t} - 1 \right) = \lambda_t\left(\frac{1+\phi}{1-\phi}\lambda_t -1\right). 
\end{equation}
\normalsize
\begin{thm}
	\label{thm1}
	A necessary and sufficient condition for the NoGe-INGARCH(p,q) model  (assuming $p>q$) to be stationary in the mean is that the roots of the equation 
	\begin{equation}
		\label{eq7}
		1- \sum_{i=1}^{q}\left(\alpha_i + \beta_i \right) b^{-i} - \sum_{i=q+1}^p  \alpha_i b^{-i} = 0,
	\end{equation}
	all lie inside the unit circle. 
\end{thm}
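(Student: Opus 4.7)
The plan is to reduce the problem to a standard linear difference equation for the sequence of unconditional means and then invoke classical stability theory. I would start from the defining relation $\lambda_t = \alpha_0 + \sum_{i=1}^p \alpha_i X_{t-i} + \sum_{j=1}^q \beta_j \lambda_{t-j}$ and take unconditional expectations on both sides, assuming for the moment that the first moments exist. Using the tower property together with the conditional mean identity $E[X_t \mid \mathscr{F}_{t-1}] = \lambda_t$ derived in (\ref{eqr}), one obtains $E[X_{t-i}] = E[\lambda_{t-i}]$, so writing $\mu_t := E[X_t]$ gives the deterministic recurrence
\begin{equation*}
\mu_t = \alpha_0 + \sum_{i=1}^{p} \alpha_i \mu_{t-i} + \sum_{j=1}^{q} \beta_j \mu_{t-j}.
\end{equation*}

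Next, because we assume $p > q$, I would group the terms by lag: for lags $i=1,\ldots,q$ the coefficient is $\alpha_i + \beta_i$, while for $i=q+1,\ldots,p$ only $\alpha_i$ appears. This yields
\begin{equation*}
\mu_t - \sum_{i=1}^{q}(\alpha_i+\beta_i)\mu_{t-i} - \sum_{i=q+1}^{p} \alpha_i \mu_{t-i} = \alpha_0,
\end{equation*}
a nonhomogeneous linear difference equation of order $p$ with constant coefficients and forcing term $\alpha_0$. Its characteristic polynomial, obtained by the ansatz $\mu_t \propto b^t$ and dividing through by $b^{t-p}$, is precisely $b^p - \sum_{i=1}^{q}(\alpha_i+\beta_i)b^{p-i} - \sum_{i=q+1}^{p}\alpha_i b^{p-i} = 0$, which after dividing by $b^p$ is equation (\ref{eq7}).

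I would then appeal to the standard result on linear difference equations: the general solution is the sum of a particular constant solution $\mu^{*} = \alpha_0/\bigl(1 - \sum_{i=1}^{q}(\alpha_i+\beta_i) - \sum_{i=q+1}^{p}\alpha_i\bigr)$ and a homogeneous part that is a linear combination of terms of the form $b_k^t$ (and polynomial-times-$b_k^t$ for repeated roots), where $b_k$ are the roots of (\ref{eq7}). The sequence $\mu_t$ converges to the finite constant $\mu^{*}$ for every choice of initial values if and only if every root satisfies $|b_k| < 1$; this gives the sufficient direction, and conversely, if some root lies on or outside the unit circle, one can choose initial conditions exciting that mode to produce a non-convergent (hence non-stationary in mean) sequence, yielding necessity. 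The particular solution $\mu^{*}$ coincides with the stationary mean $\mu$ in (2), so the two notions match.

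The proof is essentially an application of classical stability theory for linear recurrences, so I do not anticipate a genuine obstacle; the only subtle point is the necessity direction, where one must rule out cancellation of the bad mode by a careful choice of initial data, and verify that the reduction from expectations of $\lambda_t$ to expectations of $X_t$ goes through under the assumption that first moments exist. I would address both by noting that the initial $\mu_0,\ldots,\mu_{-(p-1)}$ are free parameters of the recurrence and can be chosen to give nontrivial projection onto any eigenspace of the companion matrix of (\ref{eq7}).
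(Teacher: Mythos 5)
Your proposal is correct and follows essentially the same route as the paper: take unconditional expectations using $E[X_t\mid\mathscr{F}_{t-1}]=\lambda_t$ to obtain the nonhomogeneous linear difference equation for $\mu_t$, identify its characteristic equation \eqref{eq7}, and invoke the classical necessary-and-sufficient stability criterion for linear recurrences (the paper phrases this via the deviation sequence $y_t=\mu_t-\mu$ and cites Goldberg, while you phrase it as particular-plus-homogeneous solution, which is the same argument). Your explicit remark on the necessity direction --- choosing initial data with nontrivial projection onto a bad mode --- is a reasonable elaboration of what the paper delegates to the cited reference.
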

\begin{proof}
We begin by taking expectations of both sides of the NoGe-INGARCH(p,q) model defined by \eqref{eq2}, assuming $\mu_t = E[X_t]$.
	\begin{equation}
		\begin{aligned}\label{pf11}
			\mu_t &= E[X_t] = E[\lambda_t] \\
			&= \alpha_0 + \sum_{i=1}^p \alpha_iE[X_{t-i}] + \sum_{i=q+1}^p \beta_j E[\lambda_{t-j}]\\
			&= \alpha_0 + \sum_{i=1}^p \alpha_i\mu_{t-i}+\sum_{i=q+1}^p\beta_j\mu_{t-j}.
		\end{aligned}
	\end{equation}
Rearranging \eqref{pf11} to the form of a non-homogeneous difference equation  :
		\begin{equation}\label{pf12}
			\mu_t - \sum_{i=1}^p \alpha_i \mu_{t-i} - \sum_{j=1}^q \beta_j \mu_{t-j} = \alpha_0. 
		\end{equation}
If there exists a constant solution $\mu_t = \mu$ for all $t$, then $\mu$ is called an \textit{equilibrium} or \textit{stationary value}. Substituting $\mu_t = \mu$ into \eqref{pf12}, we get 
\newline
\begin{equation}
	\mu - \sum_{i=1}^p \alpha_i \mu - \sum_{j=1}^q \beta_j \mu = \alpha_0,
\end{equation}
\normalsize
which simplifies to:
\begin{equation}
	\mu =\frac{\alpha_0}{1- \sum_{i=1}^p\alpha_i - \sum_{j=1}^q \beta_j}, \quad \text{provided } 1- \sum_{i=1}^p\alpha_i - \sum_{j=1}^q \beta_j \neq 0.
	\label{eq:equilibrium}
\end{equation}
The equilibrium $\mu$ is said to be \textit{stable} if every solution of \eqref{pf12}, regardless of the initial values $\mu_0$ and $\mu_1$, converges to $\mu$ as $t \to \infty$:
\begin{equation}
	\lim_{t \to \infty} \mu_t = \mu.
	\label{eq:stability_condition}
\end{equation}
To analyze stability, we define a new sequence $y_t$ to represent deviations from equilibrium:
\begin{equation}
	y_t = \mu_t - \mu.
	\label{eq:deviation}
\end{equation}
We then have
\begin{align*}
	y_{t} - \sum_{i=1}^p \alpha_i y_{t-i}- \sum_{j=1}^q \beta_j y_{t-j} &= \left(\mu_{t} - \sum_{i=1}^p \alpha_i \mu_{t-i}- \sum_{j=1}^q \beta_j \mu_{t-j} \right) - \left(1- \sum_{i=1}^p\alpha_i - \sum_{j=1}^q \beta_j\right) \mu \\
	&=  \mu_{t} - \sum_{i=1}^p \alpha_i \mu_{t-i}- \sum_{j=1}^q \beta_j \mu_{t-j} - \alpha_0 = 0.
\end{align*}
Thus, the sequence $\{y_t\}$ satisfies a homogeneous difference equation:
\begin{equation}
	y_{t} - \sum_{i=1}^p \alpha_i y_{t-i}- \sum_{j=1}^q \beta_j y_{t-j} = 0.
	\label{eq:homogeneous}
\end{equation}
The auxilliary (characteristic) equation corresponding to \eqref{eq:homogeneous}, considering 
$p>q$ is given by:
\smaller
\begin{equation}\label{pf13}
	1- \sum_{i=1}^{q}\left(\alpha_i + \beta_i \right) b^{-i} - \sum_{i=q+1}^p  \alpha_i b^{-i} = 0.
\end{equation}
\normalsize
For $\mu_t$ to converge to a finite limit $\mu$ (stationarity), the sequence $y_t$, which represents the deviation of $\mu_t$ from its equilibrium value $\mu$, converges to zero for any given pair of initial values $y_0$ and $y_1$. Since $y_t$ satisfies the homogeneous difference equation \eqref{eq:homogeneous}, the proof now follows from the necessary and sufficient condition, discussed in \cite{goldberg58}, for a  homogeneous difference equation to have a stable solution converging to zero which is independent of initial values, viz., the roots $b_1,\ldots,b_p$ of (\ref{pf13}) all lie inside the unit circle.
\end{proof}
The following theorem states the second - order stationarity conditions for the NoGe-INGARCH model. To elucidate the core concept of the proof, a simplified NoGe-INARCH(p) model is considered.
\begin{thm}
	\label{thm2}
	Suppose that the process $\{X_t\}$ following NoGe-INARCH(p) model is first - order stationary. Then, a necessary and sufficient condition for the process to be second - order stationary is that 
	\begin{equation}
		1- L_1b^{-1} - \ldots - L_p b^{-p} = 0 \nonumber
	\end{equation}
	has all roots lying inside the unit circle, where for $r,s= 1, \ldots, p-1$, 
	$L_r = \frac{2}{1-\phi}\left(\alpha_r^2 - \sum_{v=1}^{p-1}\sum_{|i-j|=v} \alpha_i\alpha_jm_{vr}\nu_{r0}\right)$, $L_p =  \frac{2\alpha_p^2}{1-\phi}$, $\nu_{s0} = \alpha_s$, $\nu_{ss} = \sum_{|i-s|=s} \alpha_i -1$ and $\nu_{sr} = \sum_{|i-s|=r}\alpha_i, r \neq s$. Also, $M$ and $M^{-1}$ are $(p-1) \times (p-1)$ matrices such that $M=(\nu_{ij})_{i,j=1}^{p-1}$ and  $M^{-1}=(m_{ij})_{i,j=1}^{p-1}$.
\end{thm}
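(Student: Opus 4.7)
The plan is to reduce second-order stationarity of $\{X_t\}$ to the stability of a scalar linear difference equation in the second-moment sequence $u_t := E[X_t^2]$, in direct analogy with the mean-stationarity argument used in Theorem~\ref{thm1}, and then invoke the same Goldberg criterion. The starting point is the law of total variance together with the conditional moments in \eqref{eqr}: since
\[
E[X_t^2\mid\mathscr{F}_{t-1}] = \mathrm{Var}(X_t\mid\mathscr{F}_{t-1}) + \lambda_t^2 = \tfrac{2}{1-\phi}\lambda_t^2 - \lambda_t,
\]
first-order stationarity ($E[\lambda_t]=\mu$) gives $u_t = \tfrac{2}{1-\phi}E[\lambda_t^2]-\mu$. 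Expanding $\lambda_t^2$ via the NoGe-INARCH$(p)$ recursion and taking expectations then produces a relation of the form
\[
u_t = \mathrm{const} + \tfrac{2}{1-\phi}\sum_{i=1}^p \alpha_i^2\, u_{t-i} + \tfrac{4}{1-\phi}\sum_{1\le i<j\le p}\alpha_i\alpha_j\, c_{j-i},
\]
where $c_r := E[X_t X_{t-r}]$ is stationary in $t$ under the assumptions.

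The next step is to eliminate the nuisance cross-moments $c_1,\dots,c_{p-1}$ by deriving their own recursions. Conditioning on $\mathscr{F}_{t-1}$ yields, for each $s\in\{1,\dots,p-1\}$,
\[
c_s = E[\lambda_t X_{t-s}] = \alpha_0\mu + \alpha_s u + \sum_{i\neq s}\alpha_i\, E[X_{t-i}X_{t-s}],
\]
where $u$ denotes the stationary value of $u_t$. Regrouping the right-hand side by the lag $|i-s|=r$ and moving the $c_s$ contributions to the left produces a linear system of size $p-1$ whose coefficient matrix is exactly $M=(\nu_{ij})$; the $-1$ appearing in $\nu_{ss}=\sum_{|i-s|=s}\alpha_i-1$ is precisely the contribution of the moved $c_s$ term, and the right-hand side is $\alpha_0\mu\,\mathbf{1}+u\,(\nu_{10},\dots,\nu_{p-1,0})^{\top}$ thanks to $\nu_{s0}=\alpha_s$. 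Inverting $M$ then expresses each $c_r$ as an affine function of $u$ of the form $c_r = (\text{term in }\mu) + u\sum_{v=1}^{p-1} m_{rv}\nu_{v0}$.

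Substituting these affine expressions back into the recursion for $u_t$ collapses every cross-moment into a linear combination of $u_{t-i}$, leaving a non-homogeneous scalar linear difference equation
\[
u_t - \sum_{r=1}^{p} L_r\, u_{t-r} = C',
\]
whose coefficients $L_r$ are precisely those stated: the $\tfrac{2}{1-\phi}\alpha_r^2$ term comes from the diagonal $\alpha_r^2 u_{t-r}$ in the expansion of $\lambda_t^2$, the correction $-\tfrac{2}{1-\phi}\sum_{v}\sum_{|i-j|=v}\alpha_i\alpha_j\, m_{vr}\nu_{r0}$ comes from plugging in the $c_v$'s, and for $r=p$ only the diagonal piece survives since there is no $c_p$, giving $L_p=\tfrac{2\alpha_p^2}{1-\phi}$. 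Passing to $y_t=u_t-u^\ast$, where $u^\ast$ is the unique equilibrium of this equation, kills $C'$ and yields a homogeneous difference equation with characteristic polynomial $1-\sum_{r=1}^{p}L_r b^{-r}=0$. The Goldberg stability criterion already used in Theorem~\ref{thm1} then delivers $y_t\to 0$ for every pair of initial values---equivalently, second-order stationarity---iff all roots of this polynomial lie strictly inside the unit circle.

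The main obstacle is the combinatorial bookkeeping of the second step: one must carefully enumerate, for each lag $r$, which pairs $(i,j)$ with $|i-j|=r$ lie in the admissible range $\{1,\dots,p\}$, and interpret boundary terms such as $\alpha_{2s}$ when $2s>p$ as zero, so that the entries of $M$ match the stated definitions. Once the linear system is set up correctly, inverting $M$ and matching coefficients term by term to recover the stated form of each $L_r$ is routine algebra.
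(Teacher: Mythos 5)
Your proposal is correct and follows essentially the same route as the paper's own proof: conditioning to get $E[X_t^2\mid\mathscr{F}_{t-1}]=\tfrac{2}{1-\phi}\lambda_t^2-\lambda_t$, deriving the $(p-1)$-dimensional linear system for the cross-moments with coefficient matrix $M$, inverting it to express each cross-moment through the (lagged) second moment, and substituting back to obtain the scalar difference equation $u_t-\sum_{r=1}^pL_ru_{t-r}=C'$ whose stability is settled by the Goldberg root criterion. The only cosmetic difference is that the paper phrases everything in terms of $\gamma_{it}=E[X_tX_{t-i}]-\mu^2$ and explicitly retains the lag on $\gamma_{0,t-r}$ in the cross-moment solution, which is the bookkeeping point you flag at the end.
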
 

The proof is given in \textcolor{blue}{Appendix} \ref{pf1}. \par
 Unlike existing INGARCH formulations where second-order conditions are derived under simpler marginal assumptions, the NoGe-INGARCH model necessitates more careful treatment due to the interaction of dispersion and autoregression parameters.
\par The next theorem states the structure of autocovariances of the NoGe-INGARCH model, given that it is second - order stationary.
\begin{thm}
	\label{thm3}
	Suppose that $\{X_t\}$ following NoGe-INGARCH(p,q) process is second - order stationary. Let the autocovariances be defined as:
	\begin{eqnarray}
		\gamma_X(h) &=& Cov[X_t, X_{t-h}], \text{and}\nonumber\\
		\gamma_{\lambda}(h) &=& Cov[\lambda_t, \lambda_{t-h}]. \nonumber
	\end{eqnarray} Then, they satisfy the equations
	\smaller
	\begin{equation}
		\begin{aligned}
			\gamma_X(h) &= \sum_{i=1}^p\alpha_i\gamma_X(|h-i|) + \sum_{j=1}^{min(h-1,q)}\beta_j\gamma_X(h-j) +  \sum_{j=h}^q\beta_j\gamma_{\lambda}(j-h), \quad h \geq 1; \nonumber \\
			\gamma_{\lambda}(h) &=  \sum_{i=1}^{min(h,p)}\alpha_i\gamma_{\lambda}(|h-i|) +  \sum_{i=h+1}^p\alpha_i\gamma_X(i-h) + \sum_{j=1}^q \beta_j \gamma_{\lambda}(|h-j|), \quad h \geq 0.
		\end{aligned}
	\end{equation}
\end{thm}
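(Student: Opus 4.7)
The plan is to derive the two identities separately, in each case expanding one of the variables via the defining recursion \eqref{eq2}, splitting the resulting covariances according to the relative time indices, and simplifying using the tower property together with the identity $\lambda_s = E[X_s \mid \mathscr{F}_{s-1}]$.

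For $\gamma_X(h)$ with $h\geq 1$, first note that $X_{t-h}\in\mathscr{F}_{t-1}$, so conditioning on $\mathscr{F}_{t-1}$ gives $Cov[X_t,X_{t-h}]=Cov[\lambda_t,X_{t-h}]$. Substituting the $\lambda_t$-recursion from \eqref{eq2} and using bilinearity of covariance produces
\begin{equation*}
\gamma_X(h)=\sum_{i=1}^p\alpha_i\,Cov[X_{t-i},X_{t-h}]+\sum_{j=1}^q\beta_j\,Cov[\lambda_{t-j},X_{t-h}].
\end{equation*}
The first sum is $\sum_{i=1}^p\alpha_i\gamma_X(|h-i|)$ by second-order stationarity. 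For the second, I split into three cases: (a) if $j<h$, then $X_{t-h}\in\mathscr{F}_{t-j-1}$ and the tower property reduces $Cov[\lambda_{t-j},X_{t-h}]$ to $Cov[X_{t-j},X_{t-h}]=\gamma_X(h-j)$; (b) if $j=h$, then conditioning on $\mathscr{F}_{t-h-1}$ replaces $X_{t-h}$ by $\lambda_{t-h}$, giving $Cov[\lambda_{t-h},X_{t-h}]=Var[\lambda_{t-h}]=\gamma_\lambda(0)$; (c) if $j>h$, then $\lambda_{t-j}\in\mathscr{F}_{t-h-1}$ and an analogous tower step yields $Cov[\lambda_{t-j},\lambda_{t-h}]=\gamma_\lambda(j-h)$. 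Re-assembling the three cases gives the compact form $\sum_{j=1}^{\min(h-1,q)}\beta_j\gamma_X(h-j)+\sum_{j=h}^q\beta_j\gamma_\lambda(j-h)$ asserted in the theorem.

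For $\gamma_\lambda(h)$ with $h\geq 0$, I expand $\lambda_t$ directly to obtain
\begin{equation*}
\gamma_\lambda(h)=\sum_{i=1}^p\alpha_i\,Cov[X_{t-i},\lambda_{t-h}]+\sum_{j=1}^q\beta_j\,Cov[\lambda_{t-j},\lambda_{t-h}],
\end{equation*}
where the $\beta$-terms are immediately $\beta_j\gamma_\lambda(|h-j|)$ by stationarity. For the $\alpha$-terms I again split on the index: (i) when $i\leq h$, $\lambda_{t-h}$ is $\mathscr{F}_{t-i-1}$-measurable, so the innovation $X_{t-i}-\lambda_{t-i}$ has zero conditional mean and the term collapses to $Cov[\lambda_{t-i},\lambda_{t-h}]=\gamma_\lambda(|h-i|)$; (ii) when $i>h$, $X_{t-i}$ is $\mathscr{F}_{t-h-1}$-measurable, so conditioning on $\mathscr{F}_{t-h-1}$ replaces $\lambda_{t-h}$ by $X_{t-h}$ and the term becomes $Cov[X_{t-i},X_{t-h}]=\gamma_X(i-h)$. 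This yields the announced decomposition $\sum_{i=1}^{\min(h,p)}\alpha_i\gamma_\lambda(|h-i|)+\sum_{i=h+1}^p\alpha_i\gamma_X(i-h)$.

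The principal obstacle is bookkeeping: in each covariance one must identify which of the two entries lies in the earlier filtration, because that choice dictates the direction of the tower substitution and therefore whether $\gamma_X$ or $\gamma_\lambda$ appears on the right-hand side. Once the index comparisons ($<$, $=$, $>$) are cleanly dispatched, both recursions drop out by direct substitution. Notably, the argument uses only $\lambda_t = E[X_t\mid\mathscr{F}_{t-1}]$ and second-order stationarity, so no feature of the novel geometric conditional law beyond its mean specification is required.
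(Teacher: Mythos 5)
Your proposal is correct and follows essentially the same route as the paper: both arguments rest on the martingale-difference orthogonality $Cov[X_s-\lambda_s, Z]=0$ for any $Z$ measurable with respect to an earlier $\sigma$-field (which the paper packages as the standalone identity $Cov[X_t,\lambda_{t-h}]=Cov[\lambda_t,\lambda_{t-h}]$ for $h\geq 0$ and $=Cov[X_t,X_{t-h}]$ for $h<0$, and you carry out inline case by case), followed by substitution of the $\lambda_t$-recursion and index bookkeeping. The only cosmetic difference is that the paper conditions on the auxiliary $\sigma$-field $I_t$ generated by $\{\lambda_t,\lambda_{t-1},\ldots\}$ in one branch, whereas you condition directly on the $\mathscr{F}$-filtration; the substance is identical.
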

The proof is detailed in \textcolor{blue}{Appendix} \ref{pf}. 
\begin{rem}
Consider the NoGe-INGARCH (1,1) model:
\begin{equation}
	\label{eq14}
	\left\{\begin{array}{l}
		X_t \mid \mathscr{F}_{t-1}:  NoGe\left(\theta_t, \phi\right),  \\
		\frac{1-\phi}{\theta_t} = \lambda_t=\alpha_0+\alpha_1 X_{t-1}+\beta_1 \lambda_{t-1}, \nonumber
	\end{array}\right.
\end{equation}
where $\alpha_0 > 0$, $\alpha_1\geq 0$ and $\beta_1\geq 0$.
By \textcolor{blue}{Theorem} \ref{thm1}, if the process $\{X_t\}$ is stationary, we have 
\begin{equation}
	\mu = E[X_t] = \frac{\alpha_0}{1-\alpha_1 - \beta_1}. \nonumber
\end{equation}
From \textcolor{blue}{Theorem} \ref{thm3}, we obtain
\begin{equation}
	\gamma_X(h) = (\alpha_1 + \beta_1)^{h-1} \gamma_X(1), \; h\geq 2, \;\textrm{and}\; \gamma_{\lambda}(h)=(\alpha_1 + \beta_1 )^h \gamma_{\lambda}(0), \; h \geq 1, \nonumber
\end{equation}
where $\gamma_X(1) = \big(\frac{2 \alpha_1}{1-\phi} + \beta_1 \big)\gamma_{\lambda}(0) + \alpha_1\mu\big(\frac{1+\phi}{1-\phi}\mu - 1\big)$,  and $\gamma_{\lambda}(0) = \frac{\alpha_1^2\mu\big((1+\phi)\mu - (1-\phi)\big)}{(1-\phi)\big(1 - (\zeta \alpha_1 ^2 +2 \alpha_1\beta_1+ \beta_1^2)\big)}.$ Then, for $\zeta = \frac{2}{1-\phi}$, the unconditional variance, $Var[X_t] = \left(\frac{(1+\phi)}{(1-\phi)}\mu^2 - \mu \right)\frac{1- (\zeta-1)\alpha_1^2 -2\alpha_1\beta_1 - \beta_1^2}{1- \zeta \alpha_1^2 - 2\alpha_1\beta_1 - \beta_1^2}$. Further, the autocorrelations  $\rho_{X}(h)= (\alpha_1 +\beta_1)^h \frac{\alpha_1(1-\alpha_1\beta_1-\beta_1^2)}{1-(\zeta -1) \alpha_1^2 - 2\alpha_1\beta_1-\beta_1^2}, \; h \geq 1$, and $\rho_{\lambda}(h) = (\alpha_1 +\beta_1)^h, \; h\geq 0$. The next section delves into the estimation techniques used for estimation of parameters of the model.
\end{rem}
\section{Estimation}\label{est}
In the present study, we have used two methods for estimation of parameters --- conditional maximum likelihood and Bayesian estimation. \\
Let $\boldsymbol{\alpha} =  (\alpha_0,\alpha_1, \ldots \alpha_p)^T$, $\boldsymbol{\beta} = (\beta_1,\ldots, \beta_q)^T$, $\boldsymbol{\theta}= (\boldsymbol{\alpha}^T,\boldsymbol{\beta}^T)^T=(\theta_0,\theta_1,\ldots \theta_{p+q})^T$, $\boldsymbol{\Theta} =  (\phi,\boldsymbol{\theta})^T= (\Theta_1,\Theta_2,\ldots \Theta_{p+q+2})^T$.
To estimate $\boldsymbol{\Theta}$, we will first define the conditional log - likelihood function 
\begin{equation}
	L(\boldsymbol{\Theta}\mid  \mathscr{F}_{t-1}) =\prod_{t=1}^{n} Pr[	X_t = x_t \mid \mathscr{F}_{t-1}]= \prod_{t=1}^{n} \phi ^{\delta_t} \left[(1-\phi)^2\left(1-\frac{1-\phi}{\lambda_t}\right)^{x_t - 1} \frac{1}{\lambda_t}\right]^{(1-\delta_t)},
\end{equation}
where $\lambda_t$ is updated according to definition in (\ref{eq2}), $\delta_t$ is the indicator function assuming $1$ when $x_t=0$ and $0$ otherwise, and  $0<\phi<1$ . This implies that the conditional log - likelihood is of the form
\smaller
\begin{equation}
	\label{logl1}
	\begin{aligned}
		l(\boldsymbol{\Theta}\mid  \mathscr{F}_{t-1})&=log L(\boldsymbol{\Theta}\mid  \mathscr{F}_{t-1})  \\
		&= \sum_{t=1}^{n} \left\{\delta_t log\phi + (1-\delta_t)\left[ 2log(1-\phi) + (x_t - 1) log\left(1-\frac{1-\phi}{\lambda_t}\right) - log(\lambda_t) \right]\right\}.
	\end{aligned}
\end{equation}		
\normalsize
The conditional maximum likelihood estimates (CMLEs) of the parameter vector $\boldsymbol{\Theta}$  can be found by maximizing $l(\boldsymbol{\Theta}\mid  \mathscr{F}_{t-1})$ specified in (\ref{logl1}). However, it is easy to see that the estimates have no closed form and numerical optimization methods have to be used. In the following subsection, we develop an algorithm for numerical optimization of log-likelihood function using Bayesian methods.
\subsection{The Hamiltonian Monte Carlo (HMC) algorithm}\label{hmc}
Let $\mathscr{P}_0(\boldsymbol{\Theta})$ denote the prior density of $\boldsymbol{\Theta}$. By Bayes' rule, the posterior density of $\boldsymbol{\Theta}$ given the data is proportional to the product of the likelihood function and the prior, i.e.,
\begin{equation}
	\label{post}
	\mathscr{P}(\boldsymbol{\Theta} \mid \mathscr{F}_{t-1}) \propto L(\boldsymbol{\Theta} \mid \mathscr{F}_{t-1} ) \mathscr{P}_0(\boldsymbol{\Theta}).
\end{equation}
In general, we assume multivariate normal non-informative priors  with large variances for the parameters \citep{andrade2023zero}. However, one must bear in mind that for the model proposed in (\ref{eq2}), all parameters are assumed non-negative or strictly positive (in the case of $\alpha_0$) real-valued. In addition, constraints imposed by the stationarity conditions on the parameters and that the parameter $\phi$ is always bounded between 0 and 1 should also be speculated. So, we transform the constrained parameters to the real space by suitable functions such as log-odds (logit) , square root etc. depending on the bounds imposed by the conditions. Then, we proceed to assign normal priors to the transformed unconstrained parameters. \par In the present article, we have considered logit transformations, where $logit(x) = log(\frac{x}{1-x}), \; x \in (0,1)$, to the constrained parameters with the exception of the intercept $\alpha_0$ for which a log-normal prior is assumed. That is,
\[
(\alpha_1^{'},\ldots,\alpha_p^{'}, \beta_1^{'}, \ldots, \beta_q^{'})^{T} \equiv (\boldsymbol{\alpha^{'},\beta^{'}}) ^T \sim N(\boldsymbol{\mu_{\alpha^{'},\beta^{'}}, \Sigma_{\alpha^{'},\beta^{'}}}), 
\]
\[
\alpha_0 \sim lognormal(\mu_{\alpha_0}, \sigma_{\alpha_0}^2), \quad
\phi^{'} \sim N(\mu_\phi, \sigma_\phi^2 ),
\]
where $\boldsymbol{\alpha}^{'} = (\alpha_1^{'},\ldots,\alpha_p^{'})$,  $\boldsymbol{\beta}^{'} = (\beta_1^{'}, \ldots, \beta_q^{'})$, and $\phi^{'}$ denote the logit - transformed parameters. For parameters bounded in the open interval $(l,u)$, we can use the scaled log-odds transformation given by $ \theta ^{'} = \text{logit} \left( \frac{\theta - l}{u - l} \right)$. 
The inverse of this transform is $\theta = l + (u - l) \cdot \text{logit}^{-1}(\theta^{'})$. The task that remains is to draw samples from the posterior distribution. \par
Statistical models with complex posterior distributions often rely on Markov chain Monte Carlo (MCMC) methods, detailed by \cite{gelfand1990sampling}, to sample from these distributions. The Metropolis-Hastings(MH) algorithm, discussed by \cite{tierney1994markov} as a versatile method for creating Markov chains from the target distribution $\mathscr{P}(\boldsymbol{\Theta})$, is commonly employed. The algorithm employs a ``jumping distribution" $\mathfrak{p}(\boldsymbol{\Theta}^* | \boldsymbol{\Theta})$ to enable transitions from a current parameter state $\boldsymbol{\Theta}$ to a proposed state $\boldsymbol{\Theta}^*$ within the parameter space. These proposed transitions are accepted with a probability $\min \left(1, \frac{\mathfrak{p}(\boldsymbol{\Theta}^* | \boldsymbol{\Theta})}{\mathfrak{p}(\boldsymbol{\Theta} | \boldsymbol{\Theta}^*)}\right)$. A likely choice for $\mathfrak{p}(.)$ is a normal distribution rendering the sequence of samples into a Gaussian random walk \citep{chib1995understanding}. \par The random walk Metropolis is easy to implement and has an intuitive appeal. It tends to propose points in regions with large volumes, often pushing proposals towards the tails of the target distribution. The Metropolis correction step then rejects proposals that fall into low-density areas, favoring those within high-probability regions, thereby concentrating towards the typical set of the target distribution. However, as the dimensionality of the target distribution increases, the volume outside the typical set becomes dominant, causing most proposals to land in low-density tail regions, resulting in very low acceptance probabilities and frequent rejections. Reducing the step size, $\epsilon$, (specified based on the required level of accuracy) of proposals can increase acceptance by staying within the typical set, but this slows down the movement of the Markov chain significantly leading to highly biased MCMC estimators. \citep{betancourt2017conceptual}
\par The HMC algorithm, initially known as Hybrid Monte Carlo \citep{duane1987hybrid}, extends the MH algorithm by generating more accurate proposal values through the use of Hamiltonian dynamics. It introduces a momentum variable $\nu_j$ for each component $\Theta_j$ in the target space. These variables $\Theta_j$ and $\nu_j$ are simultaneously updated using a modified MH algorithm, where the jumping distribution for $\Theta$ is heavily influenced by $\nu$. Essentially, the momentum $\nu$ indicates the expected distance and direction of jump in $\Theta$, promoting consecutive jumps in a consistent direction and facilitating rapid movement across the $\Theta$ space when feasible. This is implemented using the leapfrog algorithm. Thus, it succeeds in suppressing the random walk behaviour. The MH accept/reject rule halts movement upon entering low-probability regions, prompting momentum adjustments until transition can resume \citep{gelman2013bayesian}.

In short, in the above sampling strategy, proposal distributions are directed towards the mode(s) of the posterior distribution rather than being symmetrical around the current position. These proposals utilize trajectories derived from the gradient of the posterior, and then employs the MH method to accept or reject these choices \citep{neal2011}. Thus, compared to MH algorithms, HMC offers a more efficient approach to Monte Carlo sampling.\citep{kruschke2014doing}

The Hamiltonian function (See \cite{andrade2023zero}) can be written as :
\begin{equation}
	\label{ham}
	\mathcal{H}(\boldsymbol{\Theta}, \boldsymbol{\nu}) = \mathcal{U}(\boldsymbol{\Theta}) + \mathcal{K}(\boldsymbol{\nu}),
\end{equation}
where $\mathcal{H}(.)$ denotes the total energy of the system, $\mathcal{U(.)}$ represents the potential energy  and $\mathcal{K}$ the kinetic energy respectively. The changes in $\boldsymbol{\Theta}$ and $\boldsymbol{\nu}$ over iterations $\tau$ are governed by Hamilton's equations, which are derived from the partial derivatives of the Hamiltonian:
\begin{eqnarray}
	\label{hamdiff}
	\frac{d \boldsymbol{\Theta}}{d \tau} &= &\frac{\partial \mathcal{H}(\boldsymbol{\Theta}, \boldsymbol{\nu})}{\partial \boldsymbol{\nu}}, \nonumber \\
	\frac{d \boldsymbol{\nu}}{d \tau} &=& -\frac{\partial \mathcal{H}(\boldsymbol{\Theta}, \boldsymbol{\nu})}{\partial \boldsymbol{\Theta}}.
\end{eqnarray}

During any time interval $ \Delta \tau$, these equations describe a change from the state at time $\tau$ to that at $\tau + \Delta \tau$. In Bayesian analysis, the parameter $\boldsymbol{\Theta}$ is analogous to positions. The posterior distribution (\ref{post}) can be represented as a canonical distribution by employing a potential energy function as:\\
\begin{equation}
	\mathcal{U}(\boldsymbol{\Theta}) = - log\big(\mathscr{P}(\boldsymbol{\Theta}\mid \mathscr{F}_{t-1}) \big) = -log\big(L(\boldsymbol{\Theta} \mid \mathscr{F}_{t-1})\mathscr{P}_0 (\boldsymbol{\Theta})\big) + \mathscr{C},\nonumber
\end{equation}
where $ \mathscr{C}$ represents a normalizing constant, and the kinetic energy is given by
\begin{equation}
	\mathcal{K}(\boldsymbol{\nu}) = \frac{\boldsymbol{\nu}^T\Sigma^{-1}\boldsymbol{\nu}}{2},
\end{equation}
where the generalized moment $\boldsymbol{\nu}$ will be expressed as a normal random variable $\mathscr{Z} \sim N(\mathbf{0}, \Sigma)$ of the same dimension as $\boldsymbol{\Theta}$, the parameter vector of interest. This yields equations (\ref{ham}) and (\ref{hamdiff}) as
\begin{equation}
	\mathcal{H}(\boldsymbol{\Theta}, \boldsymbol{\nu}) = -log\big(\mathscr{P}(\boldsymbol{\Theta}\mid \mathscr{F}_{t-1}) \big) + \frac{\boldsymbol{\nu}^T\Sigma^{-1}\boldsymbol{\nu}}{2}, \nonumber
\end{equation}
and
\begin{equation}
	\label{hameq}
	\begin{aligned}
		\frac{d \boldsymbol{\Theta}}{d \tau} &= \frac{\partial \mathcal{H}(\boldsymbol{\Theta, \nu)}}{\partial \boldsymbol{\nu}} = \Sigma^{-1}\boldsymbol{\nu},\\\
		\frac{d \boldsymbol{\nu}}{d \tau} &= - \frac{\partial \mathcal{H}(\boldsymbol{\Theta,\nu})}{\partial \boldsymbol{\Theta}}  = \nabla_{\boldsymbol{\Theta}}log\big(\mathscr{P}(\boldsymbol{\Theta}\mid  \mathscr{F}_{t-1} ) \big),
	\end{aligned}
\end{equation}
where $\nabla_{\boldsymbol{\Theta}}log\big(\mathscr{P}(\boldsymbol{\Theta}\mid  \mathscr{F}_{t-1}) \big) = \big(l_{i} + \partial  log\mathscr{P}_0(\boldsymbol{\Theta})/\partial \Theta_{i} \big)$ is the log posterior gradient with $l_{i} =  \partial 	l(\boldsymbol{\Theta}\mid \mathscr{F}_{t-1})/\partial \Theta_{i}$ for $i=1,2,\ldots p+q+2$. In most cases, solving (\ref{hameq}) requires numerical methods, and one such method mentioned earlier is the leapfrog integrator. The  probabilistic programming language, Stan, uses NUTS technique to conduct Bayesian optimization and leverages automatic differentiation techniques to compute gradients of the posterior distribution with respect to model parameters. This feature enables efficient exploration of the parameter space and helps in obtaining more accurate and reliable estimates of posterior quantities.\citep{gelman2015stan}. The following section explains the Monte Carlo simulation study conducted for the model using the two estimation techniques.
\section{Simulation Study}\label{sim}
A simulation study was conducted to evaluate the finite sample performance of the estimators. We consider the following scenarios:
\begin{enumerate}
	\item[(I)]  NoGe-INGARCH(1,1) model with $(\alpha_0, \alpha_1, \beta_1, \phi)^T = (1,0.2, 0.1, 0.05)^T$.
	\begin{itemize}
		\item Underdispersed, low autocorrelation, and low zero - inflation. 
	\end{itemize}
	\item[(II)]  NoGe-INGARCH(1,1) model with $(\alpha_0, \alpha_1, \alpha_2,\phi)^T = (1, 0.3, 0.1, 0.05)^T$.
	\begin{itemize}
		\item Equidispersed, low autocorrelation and low zero-inflation. 
	\end{itemize}
	\item[(III)]  NoGe-INGARCH(1,1) model with $(\alpha_0, \alpha_1, \beta_1,\phi)^T = (1, 0.4, 0.2, 0.55)^T$.
	\begin{itemize}
		\item Overdispersed, high zero - inflation and high autocorrelation.
	\end{itemize}
		\item[(IV)]  NoGe-INGARCH(1,1) model with $(\alpha_0, \alpha_1, \beta_1,\phi)^T = (1, 0.4, 0.2, 0.35)^T$. 
		\begin{itemize}
			\item Overdispersed, moderate zero-inflation and autocorrelation.
		\end{itemize}
\end{enumerate}
The evaluation criteria, in the case of CMLE, are the mean absolute bias (Abs. Bias) and the mean squared error (MSE) defined as:
\begin{equation}
	Abs. Bias(\hat{\Theta})  = \frac{1}{N}\sum_{k=1}^N |\hat{\Theta}_i^{(k)}- \Theta_i^0|,\nonumber
\end{equation}
and
\begin{equation}
	MSE(\hat{\Theta}) = \frac{1}{N}\sum_{k=1}^N (\hat{\Theta}_i^{(k)} - \Theta_i^0)^2 \nonumber,
\end{equation}
respectively, where $N$ is the number of replications, $\hat{\Theta}_i^{(k)}$ is the CMLE corresponding to the $k^{th}$ Monte Carlo replicate, for $i=1,\ldots,p+q+2$ and $\Theta_i^0$ is the true value of the $i^{th}$ parameter. For the present study, we have considered $N=1000$.  Following \cite{andrade2023zero}, the corresponding metrics for evaluating the Bayesian estimates are the square - root of the mean - squared error (RMSE) and mean absolute bias. In the context of Bayesian analysis, $\hat{\Theta}_i^{(k)}$ are the respective posterior mean of the estimate of $\Theta_i^0$ under squared error loss. To obtain Monte Carlo estimates from the posterior means, we generated 25,000 MCMC iterations of the model parameter vector from the complete posterior distribution using the Hamiltonian Monte Carlo (HMC) algorithm.  We discarded the initial $50\%$ of the MCMC iterations as burn-in, resulting in a chain comprising 12,500 posterior samples each of sizes 50, 200 and 500 for each replication. The posterior samples were utilized to compute the Monte Carlo estimates of the posterior mean for each model parameter. The results are exhibited in \textcolor{blue}{Tables} \ref{tab1} to \ref{tab3b} respectively for the various scenarios listed above. Since optimisation of conditional maximum likelihood estimates consumed a lot of time (approximately 12 hrs per simulation for samples sized 50) in R, the simulation exercise was done in MATLAB, whereas the Bayesian analysis was carried out by the R interface of Stan, viz., rstan. 

\begin{table}[H]
	\center
	\caption{Simulation results for (I) }
	\renewcommand{\arraystretch}{1.25}
	\scalebox{0.7}{%
		\begin{tabular}{llrrrrrr}
			\hline
			\multirow{2}{*}{\textbf{\begin{tabular}[c]{@{}l@{}}Sample\\ size\end{tabular}}} & \multicolumn{1}{c}{\multirow{2}{*}{\textbf{Parameter}}} & \multicolumn{3}{c}{\textbf{CMLE}}                                                                                  & \multicolumn{3}{c}{\textbf{Bayesian Estimates}}                                                                     \\ \cline{3-8} 
			& \multicolumn{1}{c}{}                                    & \multicolumn{1}{l}{\textbf{Avg. Est.}} & \multicolumn{1}{l}{\textbf{MSE}} & \multicolumn{1}{l}{\textbf{Abs. Bias}} & \multicolumn{1}{l}{\textbf{Avg. Est.}} & \multicolumn{1}{l}{\textbf{RMSE}} & \multicolumn{1}{l}{\textbf{Abs. Bias}} \\ \hline
			50                                                                              & $\alpha_0=1$                                            & 0.6415                                 & 0.368                            & 0.5116                                 & 1.1005                                 & 0.1006                            & 0.1005                                 \\
			& $\alpha_1=0.2$                                          & 0.4122                                 & 0.2083                           & 0.3324                                 & 0.4321                                 & 0.2336                            & 0.2321                                 \\
			& $\beta_1=0.1$                                           & 0.1747                                 & 0.0913                           & 0.2023                                 & 0.3278                                 & 0.2288                            & 0.2278                                 \\
			& $\phi=0.05$                                             & 0.0501                                 & 0.001                            & 0.0255                                 & 0.0480                                 & 0.0327                            & 0.0269                                 \\ \hline
			200                                                                             & $\alpha_0=1$                                            & 0.8372                                 & 0.1607                           & 0.3035                                 & 1.0999                                 & 0.0999                            & 0.0999                                 \\
			& $\alpha_1=0.2$                                          & 0.3026                                 & 0.092                            & 0.1707                                 & 0.4434                                 & 0.2434                            & 0.2434                                 \\
			& $\beta_1=0.1$                                           & 0.1324                                 & 0.0419                           & 0.1464                                 & 0.3375                                 & 0.2375                            & 0.2375                                 \\
			& $\phi=0.05$                                             & 0.0503                                 & 0.0002                           & 0.0122                                 & 0.0516                                 & 0.0163                            & 0.0134                                 \\ \hline
			500                                                                             & $\alpha_0=1$                                            & 0.9379                                 & 0.0654                           & 0.1906                                 & 1.0995                                 & 0.0995                            & 0.0995                                 \\
			& $\alpha_1=0.2$                                          & 0.2245                                 & 0.0224                           & 0.0676                                 & 0.4438                                 & 0.2439                            & 0.2438                                 \\
			& $\beta_1=0.1$                                           & 0.1258                                 & 0.0251                           & 0.1201                                 & 0.3371                                 & 0.2372                            & 0.2371                                 \\
			& $\phi=0.05$                                             & 0.0501                                 & 0.0001                           & 0.0079                                 & 0.0506                                 & 0.0091                            & 0.0071                                 \\ \hline
	\end{tabular}}
	\label{tab1}
\end{table}

The simulation study reveals how varying levels of dispersion, autocorrelation, and zero-inflation influence the estimation performance under both conditional maximum likelihood estimation (CMLE) and Bayesian inference via Hamiltonian Monte Carlo. From  \textcolor{blue}{Tables} \ref{tab1}, \ref{tab2}, \ref{tab3} and \ref{tab3b}, it is clear that the absolute biases, MSE and RMSE of the estimates decrease with increase in sample size. In Scenario I, characterized by underdispersion, low autocorrelation, and minimal zero-inflation, both CMLE and Bayesian methods perform reasonably well, with decreasing absolute bias and error metrics as the sample size increases. Scenario II, which is equidispersed but maintains low levels of autocorrelation and zero-inflation, also exhibits a similar trend, though the bias and MSE remain slightly higher compared to Scenario I, particularly for CMLE in small samples. Scenarios III and IV introduce more challenging conditions with overdispersion, higher autocorrelation, and significant zero-inflation. In these cases, CMLE performance deteriorates, especially in small samples, with higher biases and MSEs. Conversely, Bayesian estimates remain relatively stable and exhibit better robustness, evidenced by consistently lower RMSE and absolute bias across sample sizes.\par
 In the case of Bayesian estimation by  HMC algorithm, the convergence of Markov chains to the stationary distributions of interest are checked using histograms of the posterior distributions, traceplots and autocorrelation of parameters for the three cases and are given in \textcolor{blue}{Appendix} \ref{plot}. 
Autocorrelation function (ACF) plots visualize the autocorrelation of MCMC chains. High autocorrelation indicates poor mixing, while low autocorrelation suggests better convergence. The ACF plots in \textcolor{blue}{Figures} \ref{fig:3}, \ref{fig:6}, \ref{fig:9} and \ref{fig:12} show that the chains behave well, as autocorrelation quickly drops to zero with increasing lag.
\begin{table}[htp]
	\center
	\caption{Simulation results for (II) }
	\renewcommand{\arraystretch}{1.25}
	\scalebox{0.7}{%
		\begin{tabular}{llrrrrrr}
			\hline
			\multirow{2}{*}{\textbf{\begin{tabular}[c]{@{}l@{}}Sample\\ size\end{tabular}}} & \multicolumn{1}{c}{\multirow{2}{*}{\textbf{Parameter}}} & \multicolumn{3}{c}{\textbf{CMLE}}                                                                                  & \multicolumn{3}{c}{\textbf{Bayesian Estimates}}                                                                     \\ \cline{3-8} 
			& \multicolumn{1}{c}{}                                    & \multicolumn{1}{l}{\textbf{Avg. Est.}} & \multicolumn{1}{l}{\textbf{MSE}} & \multicolumn{1}{l}{\textbf{Abs. Bias}} & \multicolumn{1}{l}{\textbf{Avg. Est.}} & \multicolumn{1}{l}{\textbf{RMSE}} & \multicolumn{1}{l}{\textbf{Abs. Bias}} \\ \hline
			50                                                                              & $\alpha_0=1$                                            & 0.7449                                 & 0.2991                           & 0.465                                  & 1.1005                                 & 0.1006                            & 0.1005                                 \\
			& $\alpha_1=0.3$                                          & 0.4573                                 & 0.1619                           & 0.3097                                 & 0.4321                                 & 0.2336                            & 0.2321                                 \\
			& $\beta_1=0.1$                                           & 0.1310                                 & 0.0546                           & 0.1606                                 & 0.3278                                 & 0.2288                            & 0.2278                                 \\
			& $\phi=0.05$                                             & 0.0501                                 & 0.0010                           & 0.0255                                 & 0.0480                                 & 0.0327                            & 0.0269                                 \\ \hline
			200                                                                             & $\alpha_0=1$                                            & 0.8960                                 & 0.1184                           & 0.2641                                 & 1.0999                                 & 0.0999                            & 0.0999                                 \\
			& $\alpha_1=0.3$                                          & 0.3729                                 & 0.0631                           & 0.1477                                 & 0.4434                                 & 0.2434                            & 0.2434                                 \\
			& $\beta_1=0.1$                                           & 0.1101                                 & 0.0235                           & 0.1153                                 & 0.3375                                 & 0.2375                            & 0.2375                                 \\
			& $\phi=0.05$                                             & 0.0503                                 & 0.0002                           & 0.0122                                 & 0.0516                                 & 0.0163                            & 0.0134                                 \\ \hline
			500                                                                             & $\alpha_0=1$                                            & 0.9736                                 & 0.0427                           & 0.1585                                 & 1.0995                                 & 0.0995                            & 0.0995                                 \\
			& $\alpha_1=0.3$                                          & 0.3155                                 & 0.0132                           & 0.0596                                 & 0.4438                                 & 0.2439                            & 0.2438                                 \\
			& $\beta_1=0.1$                                           & 0.1060                                 & 0.013                            & 0.0899                                 & 0.3371                                 & 0.2372                            & 0.2371                                 \\
			& $\phi=0.05$                                             & 0.0501                                 & 0.0001                           & 0.0079                                 & 0.0506                                 & 0.0091                            & 0.0071                                 \\ \hline
	\end{tabular}}
	\label{tab2}
\end{table}
\begin{table}[H]
	\center
	\caption{Simulation results for (III) }
	\renewcommand{\arraystretch}{1.25}
	\scalebox{0.7}{%
	\begin{tabular}{llrrrrrr}
		\hline
		\multirow{2}{*}{\textbf{\begin{tabular}[c]{@{}l@{}}Sample\\ size\end{tabular}}} & \multicolumn{1}{c}{\multirow{2}{*}{\textbf{Parameter}}} & \multicolumn{3}{c}{\textbf{CMLE}}                                                                                  & \multicolumn{3}{c}{\textbf{Bayesian Estimates}}                                                                     \\ \cline{3-8} 
		& \multicolumn{1}{c}{}                                    & \multicolumn{1}{l}{\textbf{Avg. Est.}} & \multicolumn{1}{l}{\textbf{MSE}} & \multicolumn{1}{l}{\textbf{Abs. Bias}} & \multicolumn{1}{l}{\textbf{Avg. Est.}} & \multicolumn{1}{l}{\textbf{RMSE}} & \multicolumn{1}{l}{\textbf{Abs. Bias}} \\ \hline
		50                                                                              & $\alpha_0=1$                                            & 0.8675                                 & 0.2341                           & 0.4018                                 & 1.0998                                 & 0.0999                            & 0.0998                                 \\
		& $\alpha_1=0.4$                                          & 0.4056                                 & 0.0828                           & 0.2412                                 & 0.4973                                 & 0.0975                            & 0.0973                                 \\
		& $\beta_1=0.2$                                           & 0.2651                                 & 0.0953                           & 0.2469                                 & 0.3099                                 & 0.1099                            & 0.1099                                 \\
		& $\phi=0.55$                                             & 0.5471                                 & 0.0049                           & 0.0563                                 & 0.5444                                 & 0.0645                            & 0.0513                                 \\ \hline
		200                                                                             & $\alpha_0=1$                                            & 0.9805                                 & 0.0944                           & 0.2479                                 & 1.1001                                 & 0.1001                            & 0.1001                                 \\
		& $\alpha_1=0.4$                                          & 0.4038                                 & 0.0256                           & 0.1281                                 & 0.4976                                 & 0.0977                            & 0.0976                                 \\
		& $\beta_1=0.2$                                           & 0.2141                                 & 0.0434                           & 0.1652                                 & 0.3104                                 & 0.1104                            & 0.1104                                 \\
		& $\phi=0.55$                                             & 0.5492                                 & 0.0012                           & 0.0282                                 & 0.5509                                 & 0.0361                            & 0.0285                                 \\ \hline
		500                                                                             & $\alpha_0=1$                                            & 0.9998                                 & 0.0459                           & 0.1711                                 & 1.1001                                 & 0.1002                            & 0.1001                                 \\
		& $\alpha_1=0.4$                                          & 0.4028                                 & 0.0100                           & 0.0795                                 & 0.4981                                 & 0.0982                            & 0.0981                                 \\
		& $\beta_1=0.2$                                           & 0.2003                                 & 0.0205                           & 0.1129                                 & 0.3106                                 & 0.1107                            & 0.1106                                 \\
		& $\phi=0.55$                                             & 0.5496                                 & 0.0005                           & 0.0174                                 & 0.5496                                 & 0.0225                            & 0.0178                                 \\ \hline
	\end{tabular}}
	\label{tab3}
\end{table}
Trace plots display parameter values across iterations, showing how the MCMC samples evolve. The trace plots in \textcolor{blue}{Figures} \ref{fig:2}, \ref{fig:5}, \ref{fig:8} and \ref{fig:11} appear random and well-mixed, with values fluctuating around a central point, indicating no anomalies. Posterior histograms illustrate parameter uncertainty in Bayesian models. The histograms in \textcolor{blue}{Figures} \ref{fig:1}, \ref{fig:4}, \ref{fig:7} and  \ref{fig:10} are stable, showing convergence, with peaks around the most probable values and mostly narrow widths, indicating low uncertainty.
\begin{table}[H]
	\center
	\caption{Simulation results for (IV) }
	\renewcommand{\arraystretch}{1.25}
	\scalebox{0.8}{%
	\begin{tabular}{llrrrrrr}
		\hline
		\multirow{2}{*}{\textbf{\begin{tabular}[c]{@{}l@{}}Sample\\ size\end{tabular}}} & \multicolumn{1}{c}{\multirow{2}{*}{\textbf{Parameter}}} & \multicolumn{3}{c}{\textbf{CMLE}}                                                                                  & \multicolumn{3}{c}{\textbf{Bayesian Estimates}}                                                                     \\ \cline{3-8} 
		& \multicolumn{1}{c}{}                                    & \multicolumn{1}{l}{\textbf{Avg. Est.}} & \multicolumn{1}{l}{\textbf{MSE}} & \multicolumn{1}{l}{\textbf{Abs. Bias}} & \multicolumn{1}{l}{\textbf{Avg. Est.}} & \multicolumn{1}{l}{\textbf{RMSE}} & \multicolumn{1}{l}{\textbf{Abs. Bias}} \\ \hline
		50                                                                              & $\alpha_0=1$                                            & 0.9026                                 & 0.2207                           & 0.392                                  & 1.1001                                 & 0.1001                            & 0.1001                                 \\
		& $\alpha_1=0.4$                                          & 0.4163                                 & 0.0637                           & 0.2015                                 & 0.4823                                 & 0.0824                            & 0.0823                                 \\
		& $\beta_1=0.2$                                           & 0.2268                                 & 0.0696                           & 0.2136                                 & 0.2826                                 & 0.0827                            & 0.0826                                 \\
		& $\phi=0.35$                                             & 0.3485                                 & 0.0041                           & 0.0506                                 & 0.3316                                 & 0.0665                            & 0.0536                                 \\ \hline
		200                                                                             & $\alpha_0=1$                                            & 1.0003                                 & 0.0773                           & 0.2225                                 & 1.1001                                 & 0.1001                            & 0.1001                                 \\
		& $\alpha_1=0.4$                                          & 0.3995                                 & 0.0127                           & 0.0908                                 & 0.4840                                 & 0.0841                            & 0.084                                  \\
		& $\beta_1=0.2$                                           & 0.2019                                 & 0.028                            & 0.1347                                 & 0.2837                                 & 0.0838                            & 0.0837                                 \\
		& $\phi=0.35$                                             & 0.3493                                 & 0.0011                           & 0.0259                                 & 0.3506                                 & 0.0342                            & 0.0278                                 \\ \hline
		500                                                                             & $\alpha_0=1$                                            & 1.0177                                 & 0.0344                           & 0.1483                                 & 1.1002                                 & 0.1003                            & 0.1002                                 \\
		& $\alpha_1=0.4$                                          & 0.3999                                 & 0.0049                           & 0.0565                                 & 0.4834                                 & 0.0835                            & 0.0834                                 \\
		& $\beta_1=0.2$                                           & 0.1908                                 & 0.0123                           & 0.0892                                 & 0.2828                                 & 0.0829                            & 0.0828                                 \\
		& $\phi=0.35$                                             & 0.3501                                 & 0.0004                           & 0.0168                                 & 0.3474                                 & 0.0235                            & 0.0186                                 \\ \hline
	\end{tabular}}
	\label{tab3b}
\end{table}

\section{Forecasting using Bayesian Predictive Distribution}

For the purpose of forecasting future values $X_{t+h}$ for $h = 1, \dots, H$, given $\mathscr{F}_{t}$, we need to compute the predictive probability distribution $\mathscr{P}_{\text{NoGe}}(x_{t+h}\mid\mathscr{F}_{t})$. Using Bayesian inference, the forecast $E(X_{t+h} \mid \mathscr{F}_{t}) = \hat{x}_{t+h}$ can be  determined by:
\begin{equation}
	\label{predprob}
	\mathscr{P}_{\text{NoGe}}(x_{t+h} \mid \mathscr{F}_{t}) = \int_{\boldsymbol{\Omega_\Theta}} \mathscr{P}_{\text{NoGe}}(x_{t+h} \mid \boldsymbol{\Theta}, \mathscr{F}_{t}) \mathscr{P}(\boldsymbol{\Theta} \mid \mathscr{F}_t) d\boldsymbol{\Theta}, 
\end{equation}
where $\boldsymbol{\Omega_\Theta}$ is the parameter space, $\mathscr{P}(\boldsymbol{\Theta} \mid \mathscr{F}_t)$ and $\mathscr{P}_{\text{NoGe}}(x_{t+h} \mid \boldsymbol{\Theta}, \mathscr{F}_{t})$ are assumed integrable and the former is the posterior density. 
Using posterior samples $\{ \boldsymbol{\Theta}^{(m)} \}_{m=1}^{M}$ from the HMC algorithm, the predictive distribution in \eqref{predprob} can be approximated by the Monte Carlo estimate:
\begin{equation}
	\label{predapprox}
	\hat{\mathscr{P}}_{\text{NoGe}}(x_{t+h} \mid \mathscr{F}_{t}) = \frac{1}{M} \sum_{m=1}^{M} \mathscr{P}_{\text{NoGe}}(x_{t+h} \mid \boldsymbol{\Theta}^{(m)}, \mathscr{F}_{t}),
\end{equation}
where 
\smaller
\begin{equation}
	\mathscr{P}_{\text{NoGe}}(x_{t+h} \mid \boldsymbol{\Theta}^{(m)}, \mathscr{F}_{t}) = \phi ^{(m)} \delta (x_{t+h}) + (1-\phi^{(m)})^2\left(1-\frac{1-\phi^{(m)}}{\lambda^{(m)}_{t+h}}\right)^{x_{t+h} - 1}
	. \frac{1}{\lambda^{(m)}_{t+h}}\{1-\delta(x_{t+h})\},
\end{equation}
\normalsize
where $\lambda^{(m)}_{t+h} = \alpha_{0}^{(m)} + \sum_{j=1}^{p}\alpha_i^{(m)}\hat{x}_{t+h-j} +\sum_{j=1}^q\beta_j^{(m)}\hat{\lambda}_{t+h-j}. $ For $h \leq j$, $\hat{x}_{t+h-j} = x_{t+h-j}$ and $\hat{\lambda}_{t+h-j} = \lambda_{t+h-j}$. Whereas for $h \geq j$, $\hat{\lambda}_{t+h-j} = \frac{1}{M}\sum_{m=1}^{M}\lambda_{t+h-j}^{(m)}$, and $\hat{x}_{n+h-j}$ can be estimated using 
\begin{equation}
	\hat{x}_{n+h} = \frac{1}{G} \sum_{m=1}^{M} E(X_{t+h} \mid \boldsymbol{\Theta}^{(m)}, \mathscr{F}_{t}),
\end{equation}
which can be computed as the mean of samples generated using the approximate predictive distribution \eqref{predapprox}. To obtain integer - valued coherent forecasts( See \cite{freeland2004forecasting}), one can consider using the conditional median ($\tilde{x}_{t+h}$) in place of the conditional mean, given by $\sum_{x_{t+h} \leq \tilde{x}_{t+h}} \hat{\mathscr{P}}_{\text{NoGe}}(x_{t+h} \mid \mathscr{F}_{t}) \geq \frac{1}{2}.$

\subsection{ Credible intervals for \(x_{t+h}\)}

Credible intervals for \( x_{t+h} \) can be obtained using the \(100\alpha\%\) and \(100(1 - \alpha)\%\) quantiles of the predictive distribution \( \mathscr{P}_{\text{NoGe}}(x_{t+h} \mid \mathscr{F}_{t}) \) (\cite{chen1999monte}). A \(100(1 - \alpha)\%\) Highest Posterior Density (HPD) region for \( y_{n+h} \) is a subset \( \mathcal{C} \in \mathbb{N}_0 \), defined by:
\[
\mathcal{C} = \left\{ x_{t+h} : \mathscr{P}_{\text{NoGe}}(x_{t+h} \mid \mathscr{F}_{t}) \geq \mathcal{M} \right\},
\]
where \( \mathcal{M} \) is the smallest number on the support \( \mathbb{N}_0 \) such that:
\[
\sum_{x_{t+h} \geq \mathcal{M}} \hat{\mathscr{P}}_{\text{NoGe}}(x_{t+h} \mid \mathscr{F}_{t}) \geq 1 - \alpha.
\]

To compute the HPD region, we use the Monte Carlo estimate of the predictive mass function \eqref{predapprox}. An algorithm for computing the 100(1-$\alpha$) credible intervals is detailed in \cite{andrade2023zero}. The \(100\alpha\%\) and \(100(1 - \alpha)\%\) quantiles, denoted respectively by \( x_{t+h,\alpha} \) and \( x_{t+h,(1-\alpha)} \), are computed as:

\[
x_{t+h,\alpha} = \min \left\{ x_{t+h}^{(\kappa)} \; \bigg| \; \sum_{k=1}^{\kappa} \hat{\mathscr{P}}_{\text{NoGe}}(x_{t+h}^{(k)} \mid \mathscr{F}_{t}) \geq \alpha \right\},
\]

\[
x_{t+h,(1-\alpha)} = \min \left\{ x_{t+h}^{(\kappa)} \; \bigg| \; \sum_{k=1}^{\kappa} \hat{\mathscr{P}}_{\text{NoGe}}(x_{t+h}^{(k)} \mid \mathscr{F}_{t}) \geq (1 - \alpha) \right\}.
\]

Therefore, the \(100(1 - \alpha)\%\) credible interval for \( x_{t+h} \) is $
\mathcal{CI}_{(1 - \alpha)} = \left[ x_{t+h,\alpha}, \; x_{t+h,(1 - \alpha)} \right].$
 In what follows, we elucidate the application of the proposed model, estimation and forecasting methods to real datasets. 

\section{Data Analysis}\label{data}
For the real data application, we consider two data sets viz., weekly data on Hepatitis - B cases and a data on the number of transactions of a stock . We fit four models to the data sets --- NoGe-INGARCH, GP-INGARCH, NB-INGARCH and PINGARCH each with $p=1$ and $q=1$. 
The model adequacy criteria are  Akaike information criterion (AIC) (\cite{aic}), conditional predictive ordinate (CPO) (\cite{gelfand1994bayesian}):
	\[
	\widehat{\text{CPO}}_t = \left( \frac{1}{M} \sum_{m=1}^{M} \frac{1}{\mathscr{P}_{\text{NoGe}}(x_t \mid \boldsymbol{\Theta}^{(m)}, \mathcal{F}_t)} \right)^{-1},
	\]
	and the discrete - case approximation of continuous ranked probability score (CRPS) (\cite{gneiting2007strictly}):		

		\[
		CRPS(F, x) = \int_{-\infty}^{\infty} \left( F(y) - \mathbf{1}(y - x) \right)^2 \, dy,
		\]
where $\mathbf{1}$ is the Heaviside step function, $x$ is the actual observation, and $F$ is the cumulative density function associated with the predictive distribution. The integral may be approximated by discrete sums, as follows:		
		\[
		\text{CRPS}(F, x) = \sum_{j=0}^{x} F(y_j)^2 + \sum_{j=x+1}^{n} (F(y_j) - 1)^2,
		\] 
where the index $n$ corresponds to the last non-zero probability value of the distribution. Since the score is defined for a particular observation, in the context of data analysis, we consider the average CRPS value:
		\[
		\widehat{CRPS} = \frac{1}{N} \sum_{i=1}^{N} \text{CRPS}_i ,
		\]
		where $N$ is the total number of observations in the training set of the data. The model with the least average CRPS is considered better as compared to other fitted models. Likewise, as pointed out by \cite{andrade2023zero}, it is numerically more stable to compute $\log(\text{CPO}_t)$ instead of $\text{CPO}_t$. $\log(\widehat{\text{CPO}})$ and so we consider the average $\log(\widehat{\text{CPO}}_t)$. The model with the least -$\log(\widehat{\text{CPO}})$ is considered the best-fitting model. For evaluation of forecast accuracy, we use the predictive root mean squared error (PRMSE) for the predictive means:
			\begin{equation}\nonumber
			PRMSE(h) = \sqrt{\frac{1}{N^{'}} \sum_{t=1}^{N^{'}}\left( X_{(t)} - \hat{X}^{(h)\text{Mean}}_{(t-h)}\right)^2}, \quad t=N+1, N+2,\dots N+N^{'}; 
		\end{equation} 
		where $N^{'}$ is the total number of test data points, $h$ denotes the forecast horizon, and $\hat{X}^{(h)\text{Mean}}_{(t-h)}$ represents the h-step ahead conditional mean forecast. We also consider predictive mean absolute deviation (PMAD) for predictive medians. (See  \cite{andrews2024forecast}):
			\begin{equation}\nonumber
			PMAD(h) = \frac{1}{N^{'}} \sum_{i=1}^{N^{'}}\left|X_{(t)} - \hat{X}^{(h)\text{Med}}_{(t-h)}\right|, \quad t=N+1, N+2,\dots N+N^{'}, \,
		\end{equation} 
		where $N^{'}$  and $h$ are as previously defined, and $\hat{X}^{(h)\text{Med}}_{(t_i-h)}$ denotes the h-step ahead predictive median.  
		
\subsection{Analysis of Hepatitis - B cases}
Hepadnaviruses (Hepatitis- B) viruses, can lead to both temporary and long-term liver infections. Temporary infections can result in severe illness, with about $0.5\%$ leading to fatal, rapid-onset hepatitis. Chronic infections also pose significant risks, as nearly $25\%$ result in incurable liver cancer.(See \cite{doi:10.1128/mmbr.64.1.51-68.2000}).\par
For the present study, the weekly number of hepatitis - B cases reported in the states of Bremen, Hamburg and Saxony-Anhalt from January 2017 to April 2019 is considered. The data, consisting of 137 observations, is available at \href{https://survstat.rki.de}{https://survstat.rki.de}. We consider data upto Week 110 as training and the remaing 27 data points belong to the test set for evaluating the predictive performance of the models. \textcolor{blue}{Figure} \ref{tshep} depicts the timeseries, ACF and partial autocorrelation function (PACF) plots of the data. \textcolor{blue}{Table} \ref{dat1} gives the parameter estimates with their corresponding standard errors  and posterior standard deviations (for Bayesian estimates) in square brackets respectively, along with AIC, CPO and CRPS criteria. From \textcolor{blue}{Table} \ref{dat1}, we can observe that eventhough GP-INGARCH has, by a small margin, the least AIC, it has higher values with respect to CPO and CRPS.
\begin{figure}[H]
	\centering
	\includegraphics[scale=0.4]{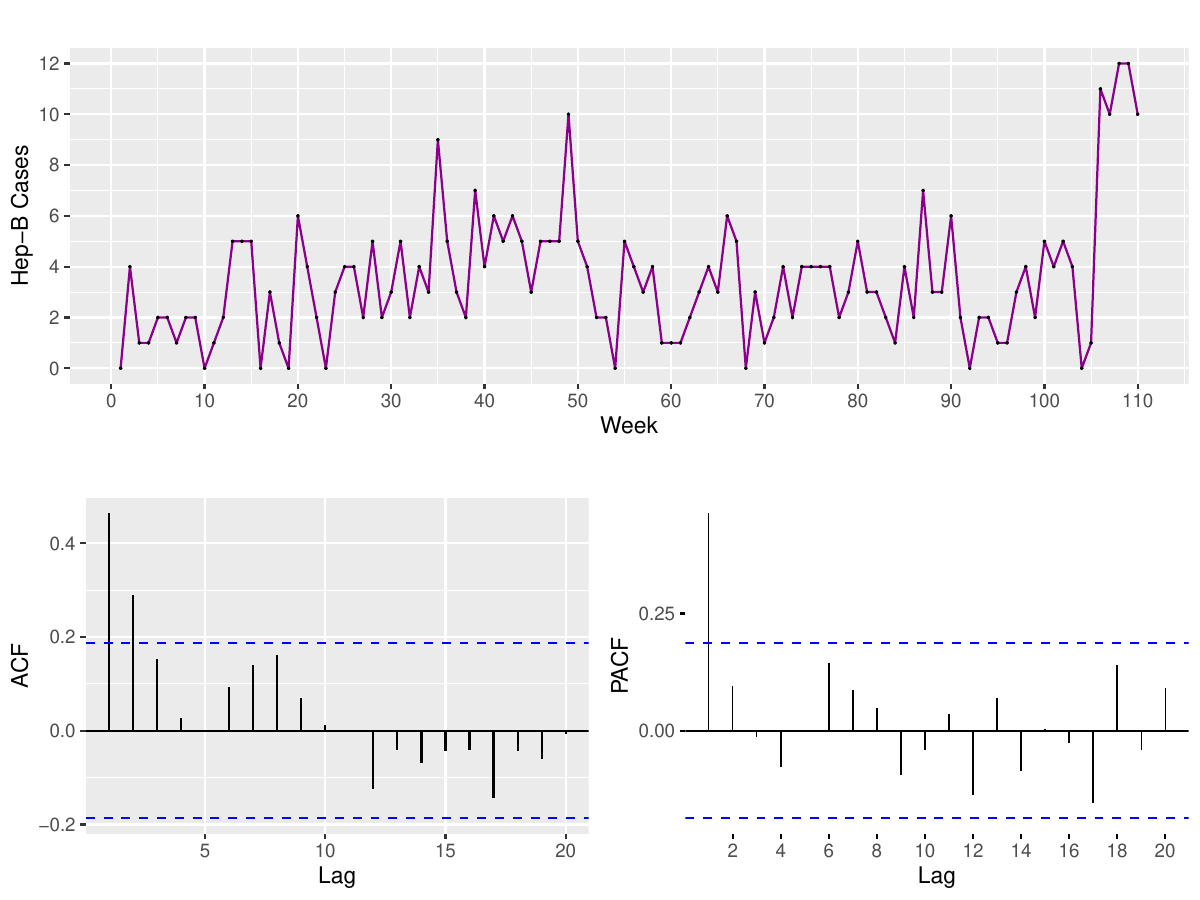}
	\caption{ Time series, ACF and PACF plots of Hepatitis-B data.}
	\label{tshep}
\end{figure}
	The NoGe-INGARCH model performs competitively in terms of AIC, CPO and CRPS criteria, and its ability to accommodate zero inflation is evident from the fitted $\phi$ value. This aligns with the nature of the data, which shows many weeks with zero reported cases. The posterior estimates of autoregressive parameters suggest a mild persistence in infection counts over time, possibly reflecting low but recurring incidence. The incorporation of zero inflation enables a more accurate representation of the data-generating process, which could be influenced by effective vaccination programs or underreporting in some weeks.
\par \textcolor{blue}{Figures} \ref{pithep} and \ref{acfhep} display the PIT histograms and ACF plots of residuals of the models respectively. The PIT histograms conform to uniform distribution behaviour for most cases, and the acf plots are all shown to be stationary with no significant lags greater than zero.
\begin{table}[H]
	\centering
	\caption{CMLEs, Bayesian estimates, and model adequacy criteria of various models fitted to Hepatitis data.}
	\renewcommand{\arraystretch}{1.5}
	\scalebox{0.6}{%
	\begin{tabular}{lccccccccccc}
		\hline
		\multirow{3}{*}{Model}               & \multicolumn{4}{c}{Conditional Maximum Likelihood Estimates} & \multicolumn{4}{c}{Bayesian Estimates}                    & \multirow{3}{*}{AIC} & \multirow{3}{*}{-log($\widehat{CPO}$)} & \multicolumn{1}{l}{\multirow{3}{*}{$\widehat{CRPS}$}} \\ \cline{2-9}
		& \multicolumn{4}{c}{Parameter}                                & \multicolumn{4}{c}{Parameter}                             &                      &                                        & \multicolumn{1}{l}{}                                  \\ \cline{2-9}
		& 1             & 2             & 3             & 4            & 1            & 2            & 3            & 4            &                      &                                        & \multicolumn{1}{l}{}                                  \\ \hline
		NoGe-INGARCH(1,1)                    & 1.5811        & 0.3806        & 0.1787        & 0.7239       & 1.9134       & 0.3089       & 0.1521       & 0.0734       & 469.56               & 250.14                                 & 1.3513                                                \\
		($\alpha_0,\alpha_1,\beta_1,\phi$)   & {[}0.8455{]}  & {[}0.1011{]}  & {[}0.2914{]}  & {[}0.0995{]} & {[}0.0417{]} & {[}0.0082{]} & {[}0.0208{]} & {[}0.0009{]} &                      &                                        &                                                       \\ \hline
		GP-INGARCH(1,1)                      & 1.5750        & 0.3827        & 0.1796        & 0.1522       & 1.8663       & 0.4046       & 0.0762       & 0.1607       & 469.44               & 253.47                                 & 1.3940                                                \\
		($\alpha_0,\alpha_1,\beta_1,\kappa$) & {[}0.2169{]}  & {[}0.0213{]}  & {[}0.2540{]}  & {[}0.0120{]} & {[}0.0352{]} & {[}0.0186{]} & {[}0.0227{]} & {[}0.0052{]} &                      &                                        &                                                       \\ \hline
		NB-INGARCH(1,1)                      & 1.6654        & 0.3471        & 0.1842        & 9.8134       & 1.9095       & 0.3824       & 0.0792       & 9.9657       & 470.86               & 260.14                                 & 1.3760                                                \\
		($\alpha_0,\alpha_1,\beta_1,n$)      & {[}1.0785{]}  & {[}0.1096{]}  & {[}0.3799{]}  & {[}5.2994{]} & {[}0.0718{]} & {[}0.0037{]} & {[}0.0198{]} & {[}0.1832{]} &                      &                                        &                                                       \\ \hline
		PINGARCH(1,1)                        & 1.5251        & 0.3636        & 0.2122        &              & 1.8107       & 0.4055       & 0.0878       &              & 474.36               & 252.36                                 & 1.3587                                                \\
		($\alpha_0,\alpha_1,\beta_1$)        & {[}0.8137{]}  & {[}0.0914{]}  & {[}0.2890{]}  &              & {[}0.0346{]} & {[}0.0038{]} & {[}0.0010{]} &              &                      &                                        & \multicolumn{1}{l}{}                                  \\ \hline
	\end{tabular}}
	\label{dat1}
\end{table}
\begin{table}[H]
	\centering
	\caption{Summary of predictive accuracy measures for the models for the test set - Hepatitis data.}
	\renewcommand{\arraystretch}{2}
	\scalebox{0.5}{%
		\begin{tabular}{lcccc}
			\hline
			\multirow{2}{*}{\begin{tabular}[c]{@{}l@{}}Accuracy\\ Measure\end{tabular}} & \multicolumn{4}{c}{Model}                                                                                                         \\ \cline{2-5} 
			& \multicolumn{1}{l}{PINGARCH} & \multicolumn{1}{l}{NB-INGARCH} & \multicolumn{1}{l}{GP-INGARCH} & \multicolumn{1}{l}{NoGe-INGARCH} \\ \hline
			PRMSE                                                                       & 4.7679                       & 4.4813                         & 4.3841                         &  4.4394                          \\
			PMAD                                                                        & 5.7037                       & 5.3704                         & 4.1481                         & 4.1111                         \\ \hline
	\end{tabular}}
	\label{pred_hep_tab}
\end{table}
\begin{figure}[H]
	\centering
	\includegraphics[scale=0.4]{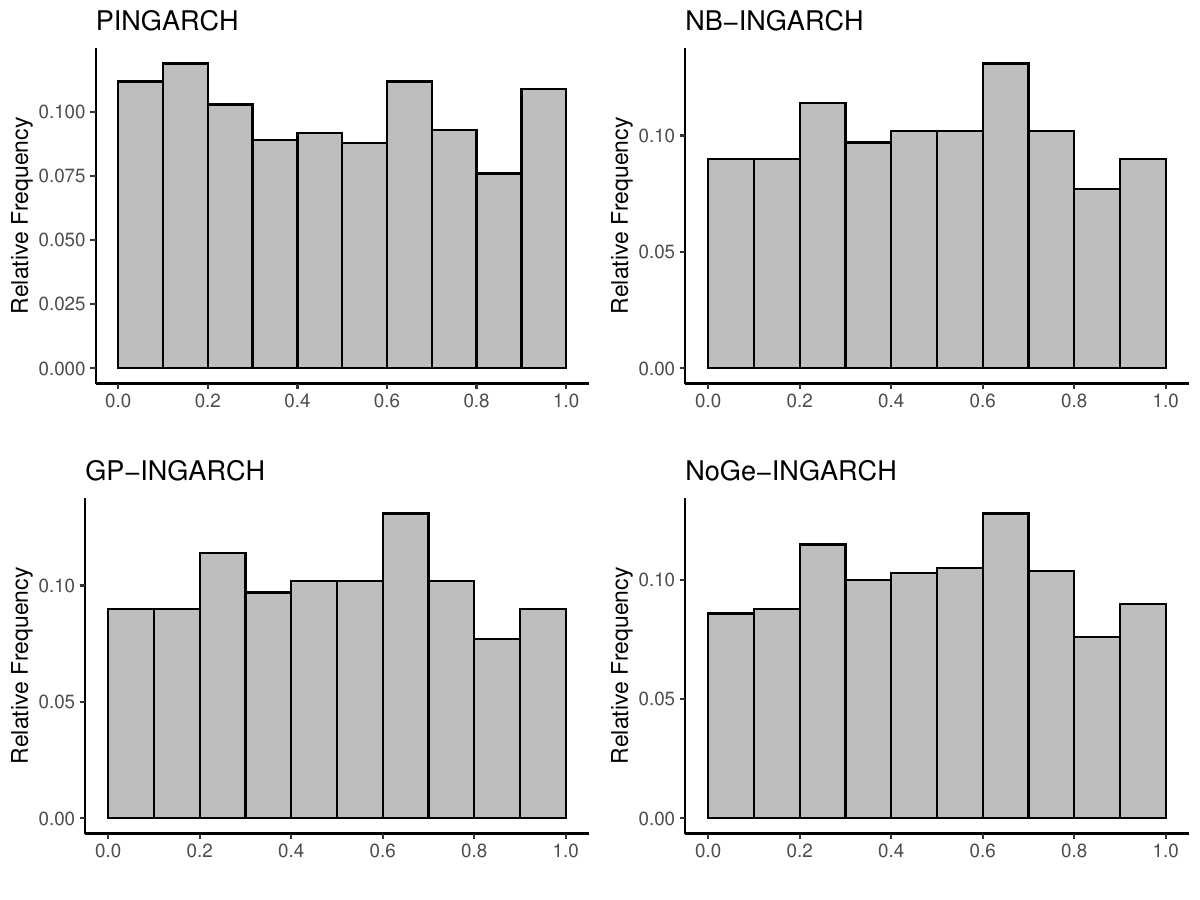}
	\caption{ PIT histograms of models fitted to Hepatitis-B data.}
	\label{pithep}
\end{figure}
Table \ref{pred_hep_tab} enlists the predictive accuracy measures for the one - step ahead mean and median forecasts provided by the models for the test set of Hepatitis - B data. It is evident that the least value of PRMSE corresponds to GP-INGARCH model. However, NoGe-INGARCH possesses least PMAD and provides more precise forecasts as shown in Figure \ref{pred_hep}.
\begin{figure}[H]
	\centering
	\includegraphics[scale=0.4]{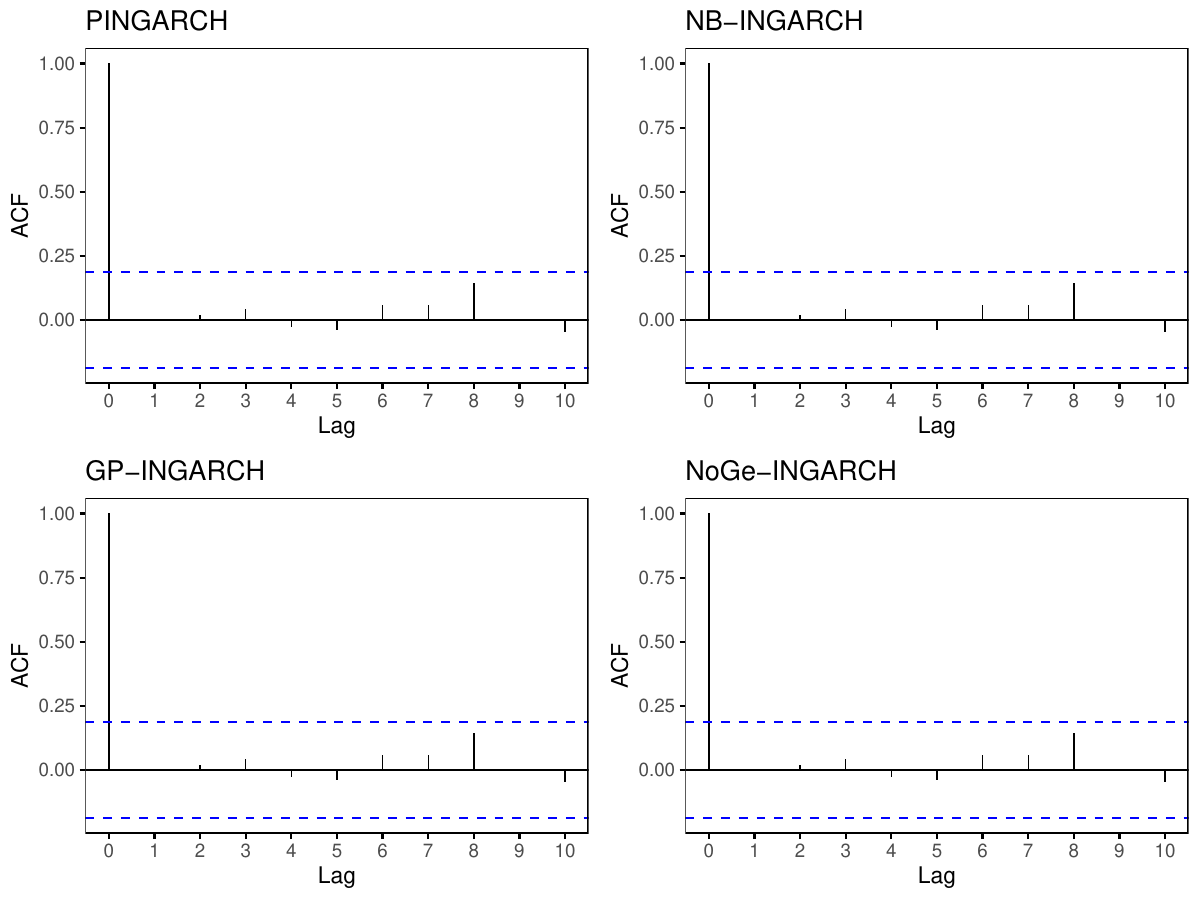}
	\caption{ ACF plots of residuals of models fitted to Hepatitis-B data.}
	\label{acfhep}
\end{figure}
	\begin{figure}[htbp]
	\centering
	\begin{minipage}[b]{0.47\textwidth}
		\centering
		\includegraphics[width=\textwidth]{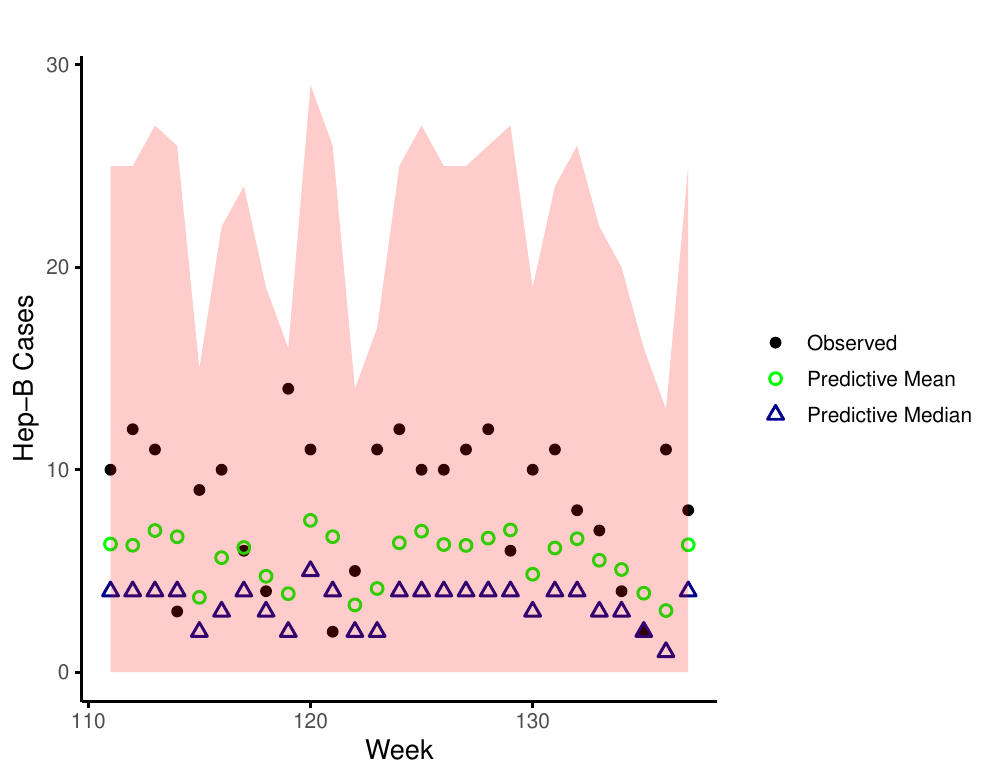}
		\caption*{(a) GP-INGARCH} 
	\end{minipage}
	\hfill
	\begin{minipage}[b]{0.47\textwidth}
		\centering
		\includegraphics[width=\textwidth]{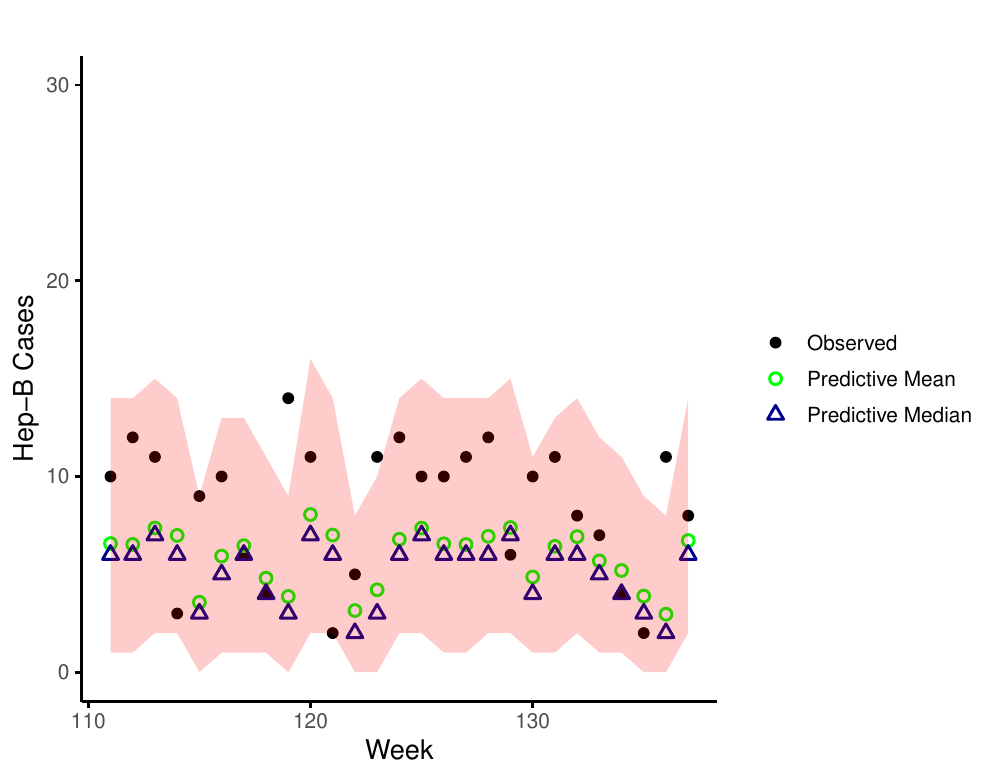}
		\caption*{(b) NoGe-INGARCH}
	\end{minipage}
	\caption{One -step ahead predictions for Weeks 111 to 137 with 95 \% credible intervals  using GP-INGARCH and NoGe-INGARCH models.}
	\label{pred_hep}
\end{figure}
\subsection{Analysis of Transactions data}
The number of transactions of the Ericsson - B stock per minute between 9:35 and 17:14 on 2 July 2002, encompassing 460 observations sourced from \cite{weiss2018introduction}. \textcolor{blue}{Figure} \ref{tstrans} shows the timeseries, ACF and PACF plots corresponding to the count time series. We consider the observations from 9:35 upto 15:42 as the training set and those between 15:42 and 17:15 as the test set.
\begin{figure}[H]
	\centering
	\includegraphics[scale=0.4]{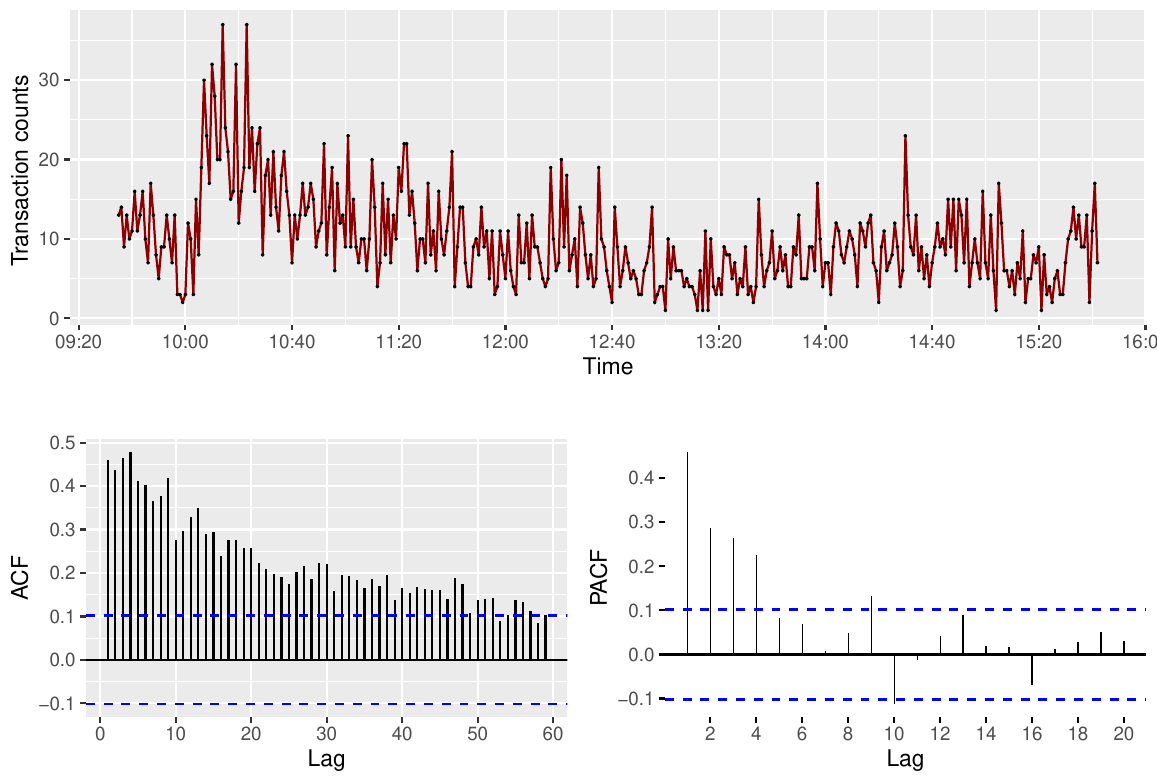}
	\caption{ Time series, ACF and PACF plots of Transactions data.}
	\label{tstrans}
\end{figure}
The data was originally published by \cite{brannas2010integer} and provides the data for working days between 2 and 22 July 2002. \cite{weiss2018introduction} concluded that the GP-INGARCH(1,1) model fits the data well, based on the AIC. \par
\textcolor{blue}{Table} \ref{dat} enlists the parameter estimates with their corresponding standard errors and posterior standard deviations in square brackets. The minimum AIC and CRPS corresponds to NoGe-INGARCH(1,1). In the context of high-frequency stock transaction data, the NoGe-INGARCH model captures the volatility clustering and persistence in transaction counts effectively, as indicated by high values of the autoregressive parameters. The near-zero value of $\phi$ reflects the expected absence of zero inflation, confirming that transactions occurred in nearly every time interval.
\begin{table}[H]
	\centering
	\caption{CMLEs, Bayesian estimates, and model adequacy criteria of various models fitted to Transactions data.}
	\renewcommand{\arraystretch}{1.5}
	\scalebox{0.5}{%
		\begin{tabular}{lccccccccccl}
			\hline
			\multicolumn{1}{c}{}                 & \multicolumn{4}{c}{Conditional Maximum Likelihood Estimates}                              & \multicolumn{4}{c}{Bayesian Estimates}                                                    & \multirow{3}{*}{AIC} & \multirow{3}{*}{-log($\widehat{CPO}$)} & \multirow{3}{*}{$\widehat{CRPS}$} \\ \cline{2-9}
			Model                                & \multicolumn{4}{c}{Parameter}                                                             & \multicolumn{4}{c}{Parameter}                                                             &                      &                                        &                                   \\ \cline{2-9}
			\multicolumn{1}{r}{}                 & 1                    & 2                    & 3                    & 4                    & 1                    & 2                    & 3                    & 4                    &                      &                                        &                                   \\ \hline
			NoGe-INGARCH(1,1)                    & 0.3013               & 0.1613               & 0.8055               & 0.1699               & 0.2924               & 0.1334               & 0.8362               & 0.0001               & 2099.94              & 1124.92                                & \multicolumn{1}{c}{2.5748}        \\
			($\alpha_0, \alpha_1, \beta_1,\phi$) & [0.1522]             & [0.0063]             & [0.0806]             & [0.0199]             & [0.0038]             & [0.0017]             & [0.0011]             & [0.0000]             &                      &                                        &                                   \\ \hline
			GP-INGARCH(1,1)                      & 0.3016               & 0.1597               & 0.8087               & 0.3181               & 0.2621               & 0.1347               & 0.8378               & 0.3229               & 2100.20              & 1051.08                                & \multicolumn{1}{c}{2.5763}        \\
			($\alpha_0,\alpha_1,\beta_1,\kappa$) & [0.1550]             & [0.0315]             & [0.0393]             & [0.0266]             & [0.0028]             & [0.0010]             & [0.0006]             & [0.0006]             &                      &                                        &                                   \\ \hline
			NB-INGARCH(1,1)                      & 0.2842               & 0.1475               & 0.8225               & 8.7250               & 0.2469               & 0.1349               & 0.8392               & 8.7742               & 2105.05              & 1209.96                                & \multicolumn{1}{c}{2.5774}        \\
			($\alpha_0, \alpha_1, \beta_1, n$)   & [0.1510]             & [0.0295]             & [0.0377]             & [1.2600]             & [0.0052]             & [0.0109]             & [0.0007]             & [0.0098]             &                      &                                        &                                   \\ \hline
			PINGARCH(1,1)                        & 0.2991               & 0.1616               & 0.8079               &                      & 0.2547               & 0.1389               & 0.8344               &                      & 2239.00              & 1169.43                                & \multicolumn{1}{c}{2.6086}        \\
			($\alpha_0, \alpha_1, \beta_1$)      & [0.1072]             & [0.0208]             & [0.0264]             &                      & [0.0101]             & [0.0078]             & [0.0090]             &                      &                      &                                        &                                   \\ \hline
	\end{tabular}}
	\label{dat}
\end{table}

\begin{figure}[H]
	\centering
	\includegraphics[scale=0.4]{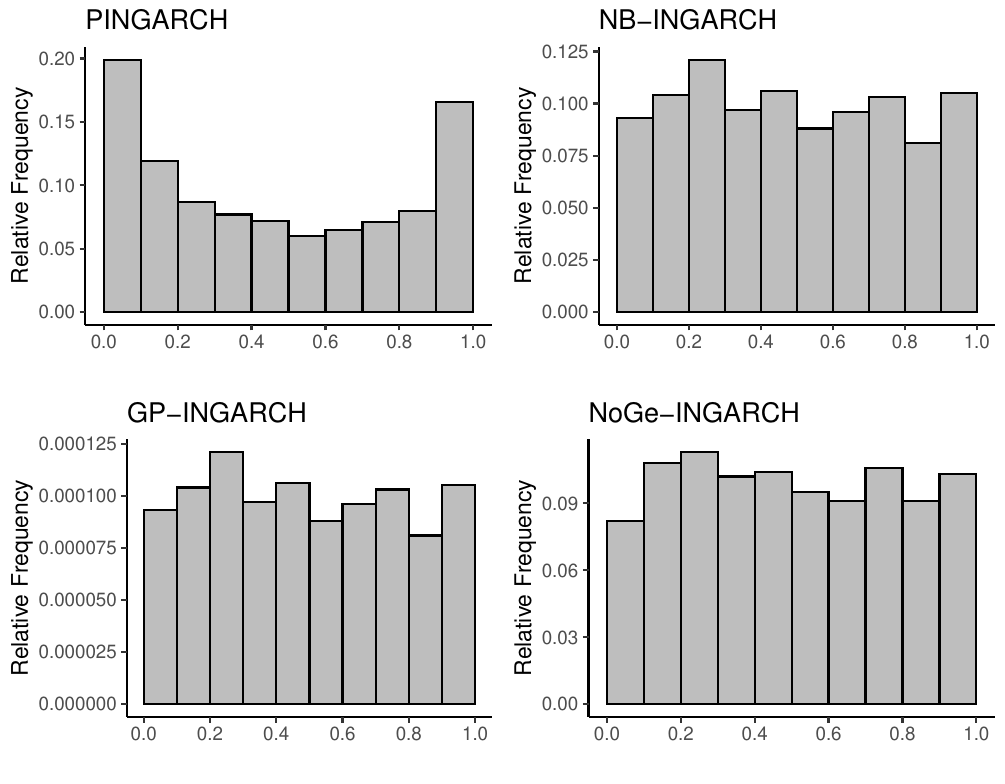}
	\caption{ PIT histograms following analyis of Transactions data.}
	\label{dafig}
\end{figure}
From \textcolor{blue}{Figure} \ref{dafig}, it can be seen that the PIT histograms of NoGe-INGARCH(1,1), GP-INGARCH(1,1) and NB-INGARCH(1,1) conform to the uniform pattern. A conclusion similar to the previous analysis that the residuals are uncorrelated can be arrived at by inspection of \textcolor{blue}{Figure} \ref{acftrans}. Compared to other models, NoGe-INGARCH balances flexibility and parsimony, yielding both low CRPS and good diagnostic behavior. These results demonstrate that the proposed model can adapt to both sparse and dense count processes across diverse application domains.\par
Table \ref{pred_tab_trans} summarizes the predictive performance of the models for the Transactions data. NoGe-INGARCH model possesses the least PRMSE and PMAD confirming better forecasts as compared to the remaining models. Moreover, the credible interval corresponding to NoGe-INGARCH covers most of the observations in the test set as shown in Figure \ref{pred_trans}.
\begin{figure}[H]
	\centering
	\includegraphics[scale=0.4]{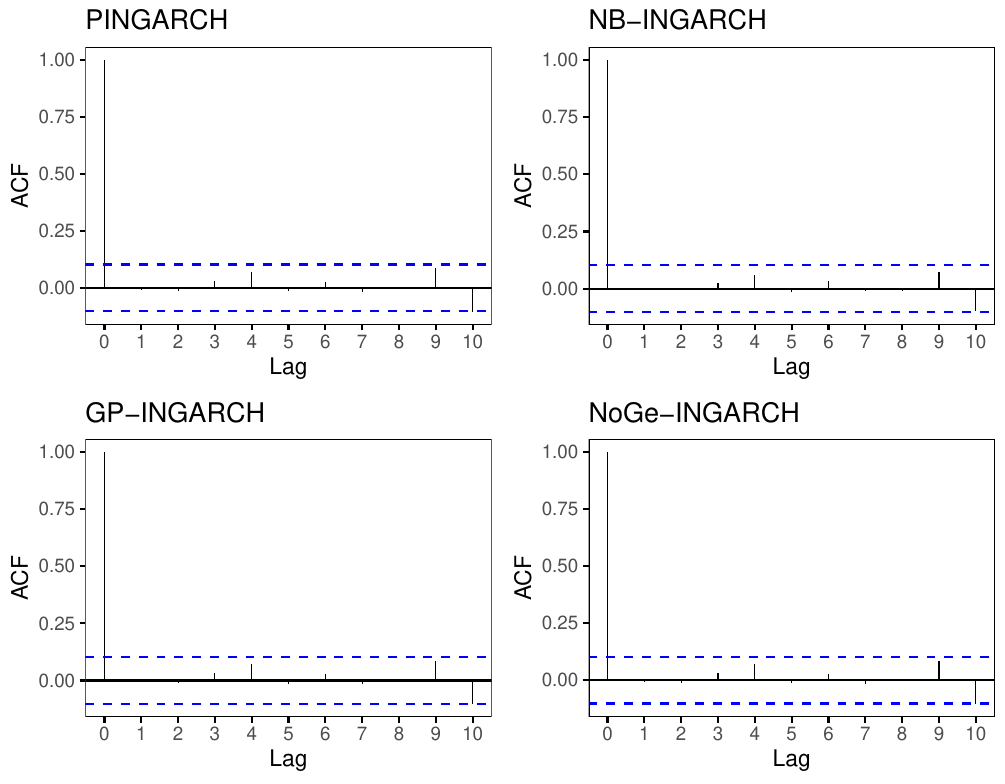}
	\caption{ ACF plots of residuals of models following analyis of Transactions data.}
	\label{acftrans}
\end{figure}
\begin{table}[H]
	\centering
	\caption{Summary of predictive accuracy measures for the models for Transactions data's test set.}
	\renewcommand{\arraystretch}{2}
	\scalebox{0.6}{%
		\begin{tabular}{lcccc}
			\hline
			\multirow{2}{*}{\begin{tabular}[c]{@{}l@{}}Accuracy\\ Measure\end{tabular}} & \multicolumn{4}{c}{Model}                                                                                                         \\ \cline{2-5} 
			& \multicolumn{1}{l}{PINGARCH} & \multicolumn{1}{l}{NB-INGARCH} & \multicolumn{1}{l}{GP-INGARCH} & \multicolumn{1}{l}{NoGe-INGARCH} \\ \hline
			PRMSE                                                                                    & 5.3365                       & 5.3237                         & 5.3306                         & 5.2615                           \\
			PMAD                                                                                     & 4.9021                       & 4.8587                        & 4.0870                          & 3.9730                          \\ \hline
	\end{tabular}}
\label{pred_tab_trans}
\end{table}
	\begin{figure}[htbp]
	\centering
	\begin{minipage}[b]{0.45\textwidth}
		\centering
		\includegraphics[width=\textwidth]{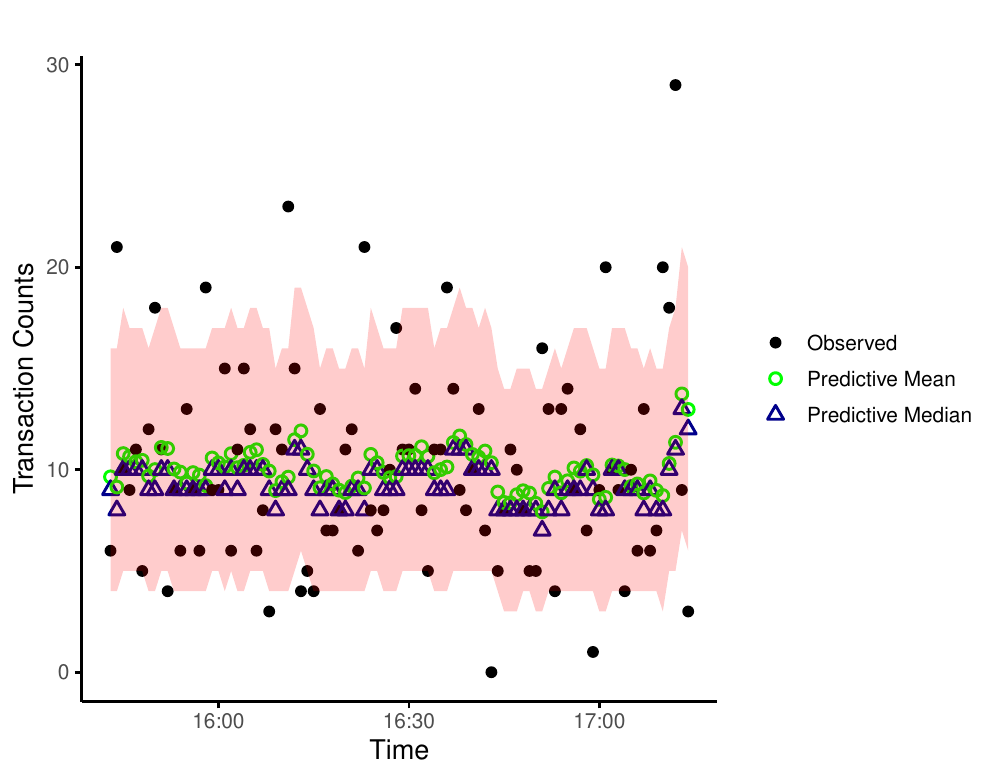}
		\caption*{(a) GP-INGARCH} 
	\end{minipage}
	\hfill
	\begin{minipage}[b]{0.45\textwidth}
		\centering
		\includegraphics[width=\textwidth]{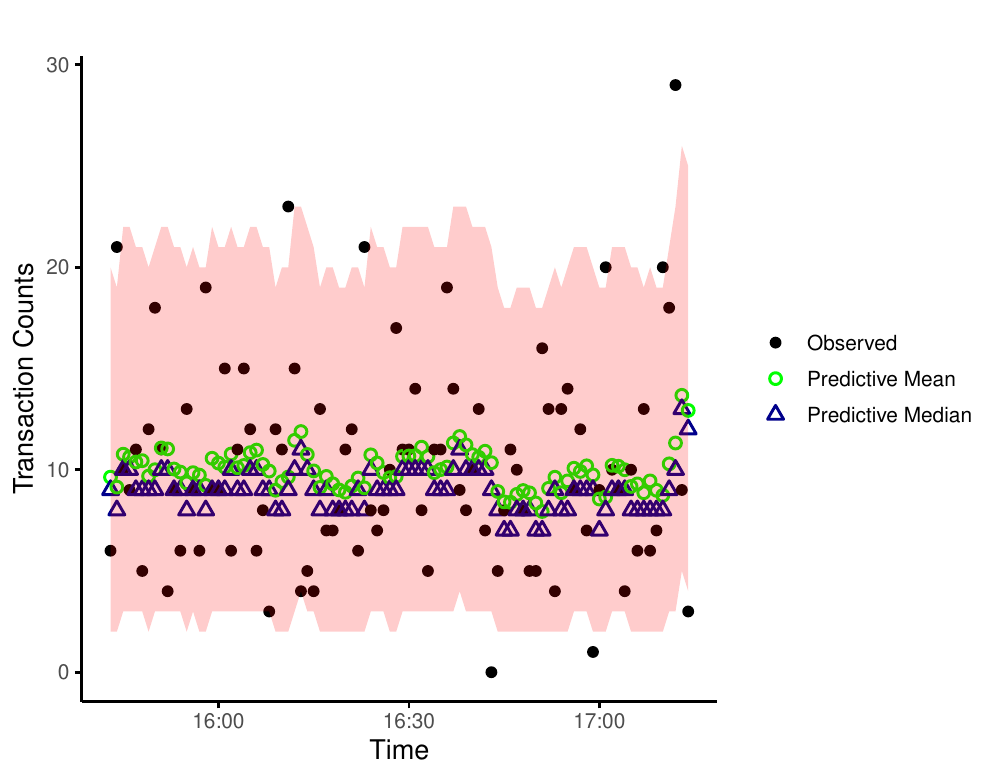}
		\caption*{(b) NoGe-INGARCH}
	\end{minipage}
	\caption{One -step ahead predictions for 15:43 to 17:14 with 95 \% credible intervals  using GP-INGARCH and NoGe-INGARCH models.}
	\label{pred_trans}
\end{figure}

\section{Conclusion}\label{conc}
In this paper, we introduced a novel geometric INGARCH (NoGe-INGARCH) process and discussed some statistical properties. The methods of estimation by conditional maximum likelihood and Bayesian approach via the HMC algorithm in the context of INGARCH models were detailed. A simulation study was conducted to compare the estimation procedures in the case of the NoGe-INGARCH model. The performance of the NoGe-INGARCH model was then assessed on two datasets and compared with existing INGARCH-type models in terms of goodness of fit and predictive accuracy.

\section*{Declaration of interest}
\noindent  No potential conflict of interest was reported by the authors.

\section*{Acknowledgement}
The authors wish to express their gratitude to the referees whose constructive suggestions have led to the present, refined version of the article. 

\section*{Data Availability Statement}
Data sharing is not applicable to this article as no new data were created or analyzed in this study.

\bibliography{sample}

\begin{thebibliography}{}

\bibitem[Akaike, 1974]{aic}
Akaike, H. (1974).
\newblock A new look at the statistical model identification.
\newblock {\em IEEE Trans. Auto. Ctrl.}, 19(6):716--723.

\bibitem[Andrade et~al., 2024]{andrade2023zero}
Andrade, M.~G., Concei{\c{c}}{\~a}o, K.~S., and Ravishanker, N. (2024).
\newblock Zero-modified count time series modeling with an application to
  influenza cases.
\newblock {\em AStA Adv. Stat. Anal.}, 108:611--637.

\bibitem[Andrews and Balakrishna, 2023]{doi:10.1080/00949655.2023.2213794}
Andrews, D.~K. and Balakrishna, N. (2023).
\newblock A novel geometric {AR}(1) model and its estimation.
\newblock {\em J. Stat. Comp. Sim.}, 93(16):2906--2935.

\bibitem[Andrews and Balakrishna, 2024]{andrews2024forecast}
Andrews, D.~K. and Balakrishna, N. (2024).
\newblock {Coherent Forecasting of NoGeAR(1) Model}.
\newblock {\em J. Ind. Soc. Prob. Stat.}, pages 1--25.

\bibitem[Betancourt, 2017]{betancourt2017conceptual}
Betancourt, M. (2017).
\newblock {A Conceptual Introduction to Hamiltonian Monte Carlo}.
\newblock {\em arXiv:1701.02434}.

\bibitem[Bollerslev, 1986]{bollerslev1986generalized}
Bollerslev, T. (1986).
\newblock Generalized autoregressive conditional heteroskedasticity.
\newblock {\em J. Eco.}, 31(3):307--327.

\bibitem[Br{\"a}nn{\"a}s and Quoreshi, 2010]{brannas2010integer}
Br{\"a}nn{\"a}s, K. and Quoreshi, A.~S. (2010).
\newblock Integer-valued moving average modelling of the number of transactions
  in stocks.
\newblock {\em Appl. Fin. Econ.}, 20(18):1429--1440.

\bibitem[Chen and Shao, 1999]{chen1999monte}
Chen, M.-H. and Shao, Q.-M. (1999).
\newblock Monte carlo estimation of bayesian credible and hpd intervals.
\newblock {\em J. Comp. Graph. Stat.}, 8(1):69--92.

\bibitem[Chib and Greenberg, 1995]{chib1995understanding}
Chib, S. and Greenberg, E. (1995).
\newblock {Understanding the Metropolis-Hastings Algorithm}.
\newblock {\em The Amer. Stat.}, 49(4):327--335.

\bibitem[Consul and Famoye, 2006]{consul2006lagrangian}
Consul, P. and Famoye, F. (2006).
\newblock {\em Lagrangian Probability Distributions}.
\newblock Birkhäuser, Boston, MA.

\bibitem[Duane et~al., 1987]{duane1987hybrid}
Duane, S., Kennedy, A.~D., Pendleton, B.~J., and Roweth, D. (1987).
\newblock {Hybrid Monte Carlo}.
\newblock {\em Physics Letters B}, 195(2):216--222.

\bibitem[Ferland et~al., 2006]{ferland2006integer}
Ferland, R., Latour, A., and Oraichi, D. (2006).
\newblock Integer-valued {GARCH }process.
\newblock {\em J. Time Ser. Anal.}, 27(6):923--942.

\bibitem[Fong et~al., 2007]{fong2007mixture}
Fong, P.~W., Li, W.~K., Yau, C., and Wong, C.~S. (2007).
\newblock On a mixture vector autoregressive model.
\newblock {\em Canadian J. Stat.}, 35(1):135--150.

\bibitem[Freeland and McCabe, 2004]{freeland2004forecasting}
Freeland, R.~K. and McCabe, B.~P. (2004).
\newblock Forecasting discrete valued low count time series.
\newblock {\em Int. J. Forecasting}, 20(3):427--434.

\bibitem[Gelfand and Dey, 1994]{gelfand1994bayesian}
Gelfand, A.~E. and Dey, D.~K. (1994).
\newblock Bayesian model choice: asymptotics and exact calculations.
\newblock {\em J. Royal Stat. Soc.: Series B (Methodological)}, 56(3):501--514.

\bibitem[Gelfand and Smith, 1990]{gelfand1990sampling}
Gelfand, A.~E. and Smith, A.~F. (1990).
\newblock {Sampling-Based Approaches to Calculating Marginal Densities}.
\newblock {\em J. Amer. Stat. Asso.}, 85(410):398--409.

\bibitem[Gelman et~al., 2013]{gelman2013bayesian}
Gelman, A., Carlin, J.~B., Stern, H.~S., Dunson, D.~B., Vehtari, A., and Rubin,
  D.~B. (2013).
\newblock {\em Bayesian Data Analysis}.
\newblock CRC Press, Boca Raton,FL, 3rd edition.

\bibitem[Gelman et~al., 2015]{gelman2015stan}
Gelman, A., Lee, D., and Guo, J. (2015).
\newblock Stan: A probabilistic programming language for bayesian inference and
  optimization.
\newblock {\em J. Edu. Behav. Stat.}, 40:530--543.

\bibitem[Gneiting and Raftery, 2007]{gneiting2007strictly}
Gneiting, T. and Raftery, A.~E. (2007).
\newblock Strictly proper scoring rules, prediction, and estimation.
\newblock {\em J. Amer. Stat. Asso.}, 102(477):359--378.

\bibitem[Goldberg, 1958]{goldberg58}
Goldberg, S. (1958).
\newblock {\em Introduction to Difference Equations}.
\newblock John Wiley \& Sons, Inc., New York.

\bibitem[Heinen, 2003]{heinen2003modelling}
Heinen, A. (2003).
\newblock {Modelling Time Series Count Data: An Autoregressive Conditional
  Poisson Model}.
\newblock In {\em CORE Disc. Papers}, 2003/62, University of Louvain, Belgium.

\bibitem[Kruschke, 2015]{kruschke2014doing}
Kruschke, J. (2015).
\newblock {\em {Doing Bayesian data analysis: A tutorial with R, JAGS, and
  Stan}}.
\newblock Academic Press, Boston, 2nd edition.

\bibitem[Liu et~al., 2023]{liu2023systematic}
Liu, M., Zhu, F., Li, J., and Sun, C. (2023).
\newblock {A Systematic Review of INGARCH Models for Integer-Valued Time
  Series}.
\newblock {\em Entropy}, 25(6):922.

\bibitem[Neal, 2011]{neal2011}
Neal, R.~M. (2011).
\newblock {\em {Handbook of Markov Chain Monte Carlo}}, chapter~5, page
  113–162.
\newblock CRC Press/Taylor \& Francis, Boca Raton,FL.

\bibitem[Seeger and Mason, 2000]{doi:10.1128/mmbr.64.1.51-68.2000}
Seeger, C. and Mason, W.~S. (2000).
\newblock {Hepatitis B Virus Biology}.
\newblock {\em Microbiology Molc. Bio. Reviews}, 64(1):51--68.

\bibitem[Tierney, 1994]{tierney1994markov}
Tierney, L. (1994).
\newblock {Markov Chains for Exploring Posterior Distributions}.
\newblock {\em The Annals of Stat.}, 22(4):1701--1728.

\bibitem[Wei{\ss}, 2009]{weiss2009modelling}
Wei{\ss}, C.~H. (2009).
\newblock Modelling time series of counts with overdispersion.
\newblock {\em Stat. Meth. Appl.}, 18:507--519.

\bibitem[Wei{\ss}, 2018]{weiss2018introduction}
Wei{\ss}, C.~H. (2018).
\newblock {\em An Introduction to Discrete-Valued Time Series}.
\newblock John Wiley \& Sons, Chichester.

\bibitem[Weiß and Schweer, 2015]{weissovrdisp}
Weiß, C.~H. and Schweer, S. (2015).
\newblock Detecting overdispersion in {INARCH(1)} processes.
\newblock {\em Stat. Neerlandica}, 69(3):281--297.

\bibitem[Xu et~al., 2012]{xu2012model}
Xu, H.-Y., Xie, M., Goh, T.~N., and Fu, X. (2012).
\newblock A model for integer-valued time series with conditional
  overdispersion.
\newblock {\em Comp. Stat. Data Anal.}, 56(12):4229--4242.

\bibitem[Zhu, 2011]{zhu2011negative}
Zhu, F. (2011).
\newblock A negative binomial integer-valued {GARCH} model.
\newblock {\em J. Time Ser. Anal.}, 32(1):54--67.

\bibitem[Zhu, 2012]{zhu2012modeling}
Zhu, F. (2012).
\newblock Modeling overdispersed or underdispersed count data with generalized
  {Poisson integer-valued GARCH models}.
\newblock {\em J. Math. Anal. and Appl.}, 389(1):58--71.

\end{thebibliography}

\bibliographystyle{apalike}
\begin{appendices}
	\section{Appendix} \label{append}
	In this Appendix, we lay out the proofs of \textcolor{blue}{Theorems} \ref{thm2}  and \ref{thm3} stated in \textcolor{blue}{Section} \ref{nogeingarch} and plots of convergence diagnostics for the Bayesian estimation of parameters in the simulation study conducted in \textcolor{blue}{Section} \ref{sim}. 
	\subsection{Proof of \textcolor{blue}{Theorem} \ref{thm2}}
	\begin{proof} \label{pf1}
		The proof is influenced by the works of \cite{fong2007mixture} and \cite{zhu2011negative}. Assuming first-order stationarity, let 
		\begin{equation}
			\begin{aligned}
				\gamma_{it} &= Cov[X_t, X_{t-i}],\\  \nonumber
				&= E[X_tX_{t-i}] - E[X_t]E[X_{t-i}], \\
				&=E[X_tX_{t-i}] -\mu^2,\; i=0,1,\ldots,p.
			\end{aligned}
		\end{equation}
		So, we need only consider $E[X_tX_{t-i}]$ to derive the conditions as $\mu^2$ remains a constant. Additionally, assume $\mathcal{C}$ to be a constant independent of $t$. If the process is second - order stationary, we have 
		\begin{equation}
			\gamma_{jt} = \gamma_{j,t-i}, \quad i=0,1,\ldots p, \; \text{and}\; j \in \mathbb{Z},\nonumber
		\end{equation}
		where $\mathbb{Z} = \{\ldots,-1,0,1,\ldots\}$. From (\ref{eqr}), and by definition we arrive at 
		\small
		\begin{equation}\nonumber
			\begin{aligned}
				E[X_{t}^2|\mathscr{F}_{t-1}] &= Var[X_{t}|\mathscr{F}_{t-1}]+ E^2[X_{t}|\mathscr{F}_{t-1}] = \lambda_t \left(\frac{1+\phi}{1-\phi} \lambda_t - 1 \right) + \lambda_t^2\\
				&=  \alpha_0(\zeta \alpha_0 -1) + (2 \zeta \alpha_0-1)\sum_{i=1}^p\alpha_iX_{t-i} + \zeta\Big\{ \sum_{i=1}^p \alpha_i^2 X_{t-i}^2 + \sum_{i\neq j}^p \alpha_i\alpha_j X_{t-i} X_{t-j} \Big\},
			\end{aligned}
		\end{equation}
		\normalsize
		where $\zeta = \frac{2}{1-\phi}$. For $s= 1, \ldots,p-1$,
		\small
		\begin{equation}
			\begin{aligned}
				\gamma_{st} + \mu ^2  &= E\big[E[X_t|\mathscr{F}_{t-1}]X_{t-s}\big]  \nonumber\\ 
				&= E\big[ \alpha_0 X_{t-s} + \sum_{i=1}^p \alpha_iX_{t-i}X_{t-s}\big]  = \alpha_0\mu + \sum_{\substack{i=1 \\ i \neq s}}^p \alpha_i \left(\gamma_{|i-s|,t} + \mu^2\right) + \alpha_s\left(\gamma_{0,t-s} + \mu^2 \right)\\
				&= \alpha_0\mu + \alpha_s \left(\gamma_{0,t-s} + \mu^2\right)+ \sum_{|i-s|=1}\alpha_i\left(\gamma_{1t}+ \mu^2\right)\ldots  +\sum_{|i-s|=p-1}\alpha_i\left(\gamma_{p-1,t} +\mu^2\right),
			\end{aligned}
		\end{equation}
		where $\gamma_{j,t-i}$ are replaced by $\gamma_{j,t}$ for $i=1,\ldots,p-1$ and $j \in \mathbb{Z}$. Hence, for $s= 1, \ldots,p-1$,
		\begin{equation}
			\alpha_0\mu + \nu_{s0}\left(\gamma_{0,t-s}+\mu^2\right) + \sum_{r=s}^{p-1} \nu_{sr}\left(\gamma_{rt} + \mu^2\right)=0. \nonumber
		\end{equation}
		\normalsize
		Therefore, 
		\begin{equation}
			M\big(\gamma_{1t} + \mu^2,\ldots,\gamma_{p-1,t} +\mu^2\big)^T = - \Big( \alpha_0\mu 
			+ \nu_{10}\left(\gamma_{0,t-1} +\mu^2\right) ,\ldots,\alpha_0\mu + \nu_{p-1,0}\left( \gamma_{0,t-p+1}+\mu^2\right)\Big)^T, \nonumber
		\end{equation}
		then
		\begin{equation}
			\gamma_{st} + \mu^2= -\alpha_0\mu \sum_{r=1}^{p-1}m_{sr} - \sum_{r=1}^{p-1}m_{sr}\nu_{r0}\left(\gamma_{0,t-r}+\mu^2\right), \quad s=1,\ldots ,p-1. \nonumber
		\end{equation}
		Let $\mathcal{C}= \zeta\alpha_0^2 - \alpha_0(1+\mu) + (2\zeta\alpha_0 -1)\sum_{i=1}^p\alpha_i\mu $ . The unconditional second moment of $X_t$, $\gamma_{0t} =E[X_t^2]$, can be rewritten as
		\small
		\begin{equation}
			\begin{aligned}
				\gamma_{0t} &= \mathcal{C}+ \zeta \left\{\sum_{u=1}^p\alpha_r^2\gamma_{0,t-r} + \sum_{i,j=1}^p \alpha_i\alpha_j\gamma_{|i-j|,t}\right\} = \mathcal{C}+ \zeta\left\{ \sum_{r=1}^p\alpha_r^2\gamma_{0,t-r} +\sum_{v=1}^{p-1}\sum_{|i-j|=v} \alpha_i\alpha_j\gamma_{vt}\right\}\\ \nonumber
				&= \mathcal{C}+ \zeta \left\{ \sum_{r=1}^p\alpha_r^2\gamma_{0,t-r} + \sum_{v=1}^{p-1}\sum_{|i-j|=v} \alpha_i\alpha_j \big(-\alpha_0\mu\sum_{u=1}^{p-1}m_{sr} - \sum_{r=1}^{p-1}m_{sr}\nu_{r0}\gamma_{0,t-r}\big)\right\}\\
				&= \mathcal{C}_0 + \zeta\left\{\sum_{r=1}^p\alpha_r^2\gamma_{0,t-r} - \sum_{r=1}^{p-1}\big(\sum_{v=1}^{p-1}\sum_{|i-j|=v} \alpha_i\alpha_j m_{vr} \nu_{r0}\big)\gamma_{0,t-r}\right\}\\
				&= \mathcal{C}_0 + \zeta \left\{\sum_{r=1}^{p-1}\left(\alpha_r^2 - \sum_{v=1}^{p-1}\sum_{|i-j|=v} \alpha_i\alpha_j m_{vr} \nu_{r0}\right)\gamma_{0,t-r}+ \alpha_p^2\gamma_{0,t-p}\right\},
			\end{aligned}
		\end{equation}
		\normalsize
		or equivalently, 
		\begin{equation}
			\gamma_{0t} = \mathcal{C}_0 + \sum_{r=1}^p L_r \gamma_{0,t-r}, \nonumber
		\end{equation}
		where $L_r = \zeta\big(\alpha_r^2 - \sum_{v=1}^{p-1}\sum_{|i-j|=v}\alpha_i\alpha_jm_{vr}\nu_{r0}\big), u=1,\ldots,p-1$,  $L_p = \zeta\alpha_p^2$ and $\mathcal{C}_0 = \mathcal{C} - \alpha_0\mu\sum_{v=1}^{p-1}\alpha_i\alpha_j\sum_{r=1}^{p-1}m_{vr}$. Therefore, the non-homogeneous difference equation has a stable solution if the equation $1-L_1b^{-1} - \ldots - L_p b^{-p} =0$ has all roots inside the unit circle. 
	\end{proof}
	\subsection{Proof of \textcolor{blue}{Theorem} \ref{thm3}} \label{pf}
	\begin{proof}
		Let $I_t$ be the $\sigma-$ field generated by $\{\lambda_t, \lambda_{t-1},\ldots,\}$, then we have
		\begin{equation}
			E[X_t\mid\mathscr{F}_{t-1},I_t] = 	E[X_t\mid\mathscr{F}_{t-1}] = \lambda_t.
		\end{equation}
		From (\ref{eqr}), we have $E[X_t\mid\mathscr{F}_{t-1}] = \lambda_t ,$ and $Var[X_t\mid \mathscr{F}_{t-1}]  = \lambda_t\left(\frac{1+\phi}{1-\phi}\lambda_t -1\right)$. So, for $h\geq 0$
		\begin{equation}
			\label{eq6}
			\begin{aligned}
				Cov[X_t - \lambda_t, \lambda_{t-h}] &= E\Big[\big(X_t-\lambda_t - \underbrace{E[X_t-\lambda_t]}_0\big)\big(\lambda_{t-h} - \underbrace{E[\lambda_{t-h}}_\mu]\big)\Big]\\
				&=   E\Big[\big(X_t-\lambda_t\big)\big(\lambda_{t-h} - \mu\big)\Big]\\
				&= E\Big[\big(\lambda_{t-h} - \mu\big) E\big[X_t-\lambda_t \mid I_t \big]\Big]\\
				&= E\Big[\big(\lambda_{t-h} - \mu\big) \big(E\big[E[X_t \mid\mathscr{F}_{t-1},I_t]\mid I_t \big]-\lambda_t\big) \Big]\\
				&= E\Big[\big(\lambda_{t-h} - \mu\big)\big( E\big[\lambda_t \mid I_t \big] - \lambda_t\big)\Big]= 0.\\
			\end{aligned}
		\end{equation} 
		
		Similarly, for $h<0$, from (\ref{eq2}), we have 
		\begin{equation}
			\label{eq7b}
			\begin{aligned}
				Cov[X_t , X_{t-h} -\lambda_{t-h}] &=E\Big[\big(X_t-\mu\big)\big(X_{t-h} - \lambda_{t-h}\big)\Big]\\
				&= E\Big[\big(X_t-\mu\big)E\big[\big(X_{t-h} - \lambda_{t-h}\big)\mid \mathscr{F}_{t-h-1}\big]\Big]\\
				&= E\Big[\big(X_t-\mu\big)\underbrace{\big(\lambda_{t-h}-E\big[ \lambda_{t-h}\mid \mathscr{F}_{t-h-1}\big]\big)}_0 \Big]=0.
			\end{aligned}
		\end{equation} 
		From (\ref{eq6}) and (\ref{eq7b}), we have 
		\begin{equation}
			\label{eq8}
			Cov[X_t,\lambda_{t-h}] =\begin{cases} Cov[\lambda_t, \lambda_{t-h}]; & h\geq0 ,\\\\
				Cov[X_t,X_{t-h}]; & h<0,
			\end{cases}
		\end{equation}
		which is due to the property $Cov[X-Y,Z] = Cov[X,Z] - Cov[Y,Z]$.
		For $h \geq 0$, from (\ref{eq2}) and (\ref{eq8}), we have
		\smaller
		\begin{equation}
			\label{eq8b}
			\begin{aligned}
				\gamma_{\lambda}(h) &= Cov[\lambda_t,\lambda_{t-h}]\\
				&= \sum_{i=1}^p\alpha_iCov[X_{t-i},\lambda_{t-h}] + \sum_{j=1}^q \beta_j Cov[\lambda_{t-j},\lambda_{t-h}]\\
				&= \sum_{i=1}^{min(h,p)}\alpha_iCov[\lambda_{t-i},\lambda_{t-h}]+\sum_{i=h+1}^p\alpha_iCov[X_{t-i},X_{t-h}] + \sum_{j=1}^q\beta_jCov[\lambda_{t-j},\lambda_{t-h}].
			\end{aligned}
		\end{equation} 
		\normalsize
		Similarly, for $h \geq 1$, we have
		\smaller
		\begin{equation}
			\label{eq8c}
			\begin{aligned}
				\gamma_{X}(h) &= Cov[X_t,X_{t-h}]\\
				&= Cov[\lambda_t,X_{t-h}]\\
				&= \sum_{i=1}^p\alpha_iCov[X_{t-i},X_{t-h}] + \sum_{j=1}^q \beta_j Cov[\lambda_{t-j},X_{t-h}]\\
				&= \sum_{i=1}^{p}\alpha_iCov[X_{t-i},X_{t-h}]+\sum_{j=1}^{min(h-1,q)}\beta_jCov[X_{t-j},X_{t-h}] + \sum_{j=h}^q\beta_jCov[\lambda_{t-j},\lambda_{t-h}].
			\end{aligned}
		\end{equation} 
	\end{proof}
	\subsection{Plots from the Simulation study}\label{plot}
	\subsubsection{Histograms of posterior distributions, traceplots and autocorrelation of parameters for configuration (I)}
	\begin{figure}[H]
		\centering
		\includegraphics[scale=0.4]{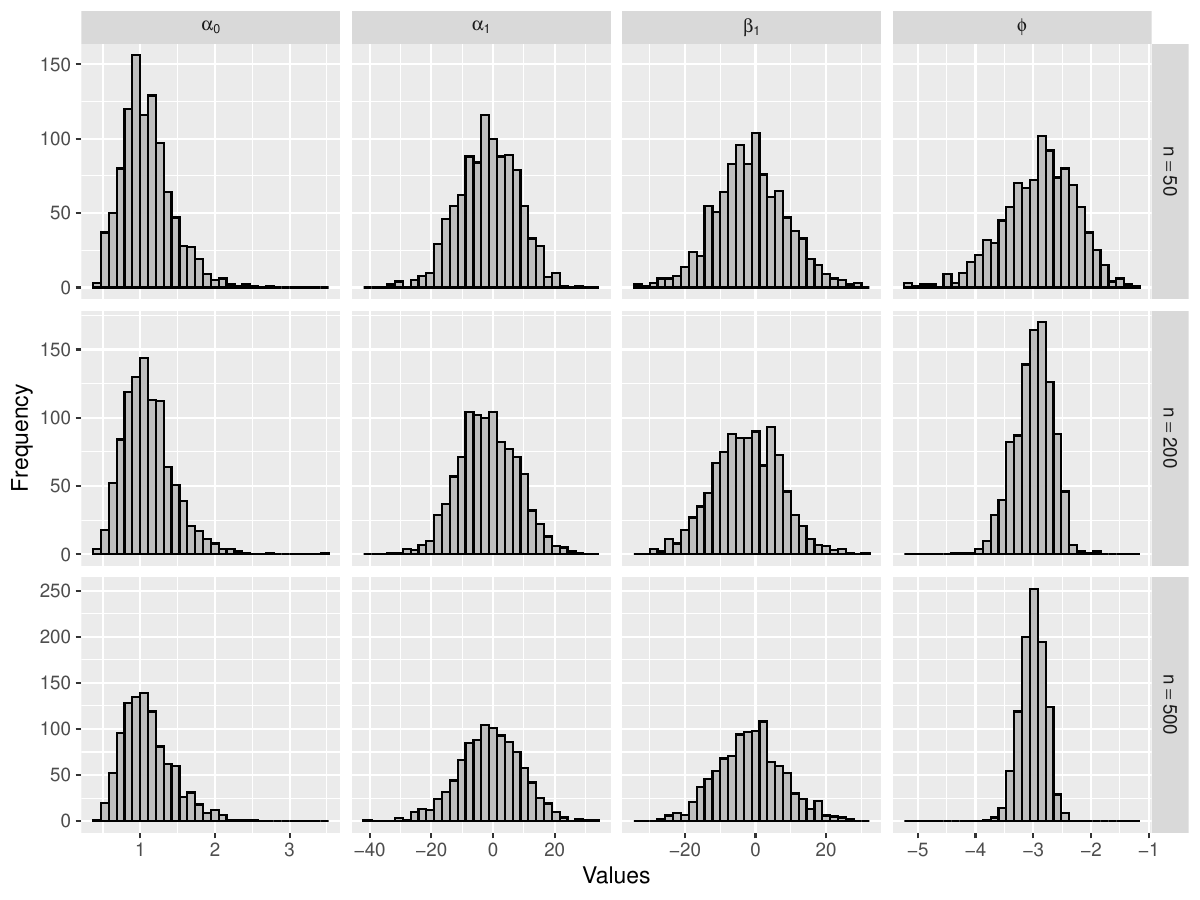}
		\caption{Histograms of posterior distributions for parameters, $\alpha_0$, $\alpha_1$, $\beta_1$ and $\phi$ for various sample sizes simulated from parameter configuration (I). }
		\label{fig:1}
	\end{figure}
	
	\begin{figure}[H]
		\centering
		\includegraphics[scale=0.4]{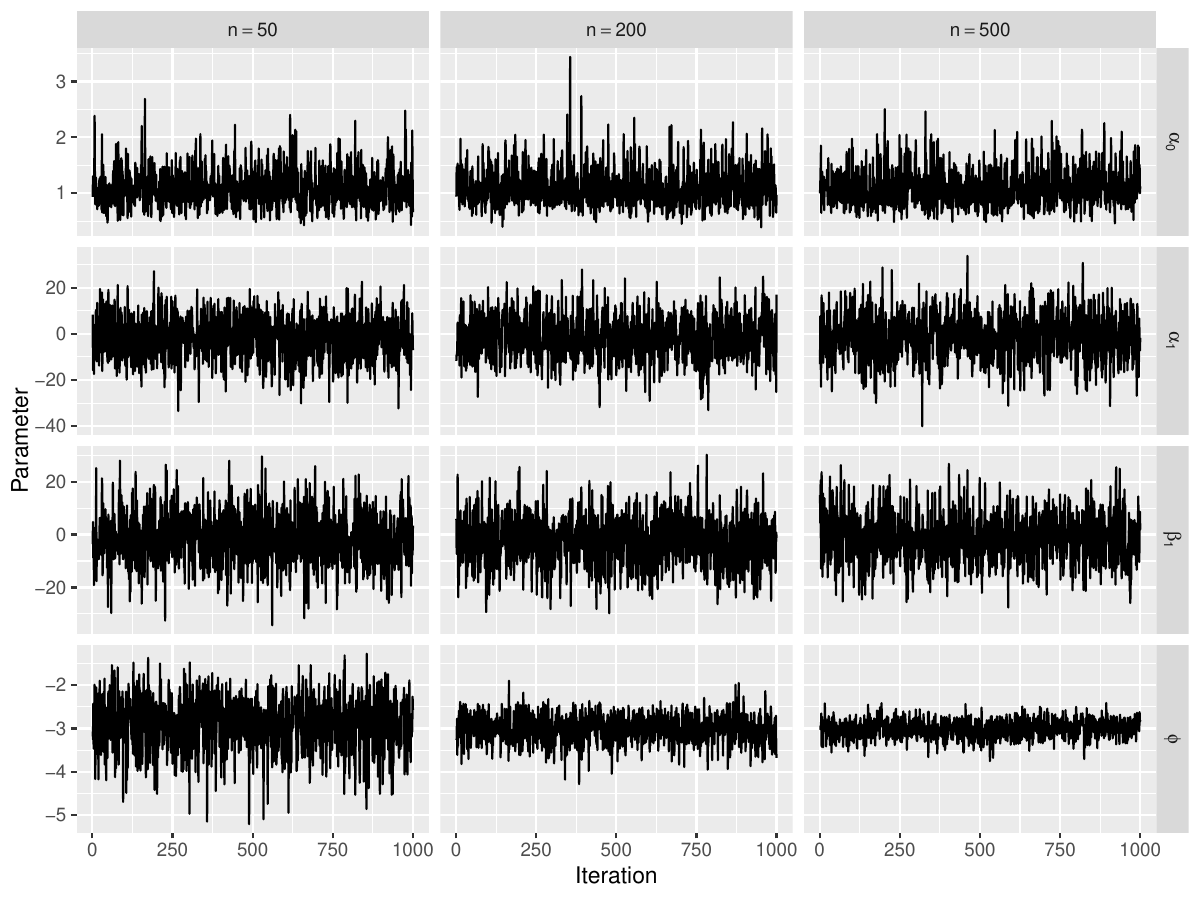}
		\caption{Traceplots for parameters, $\alpha_0$, $\alpha_1$, $\beta_1$ and $\phi$ for various sample sizes simulated from parameter configuration (I).}
		\label{fig:2}
	\end{figure}
	\begin{figure}[H]
		\centering
		\includegraphics[scale=0.5]{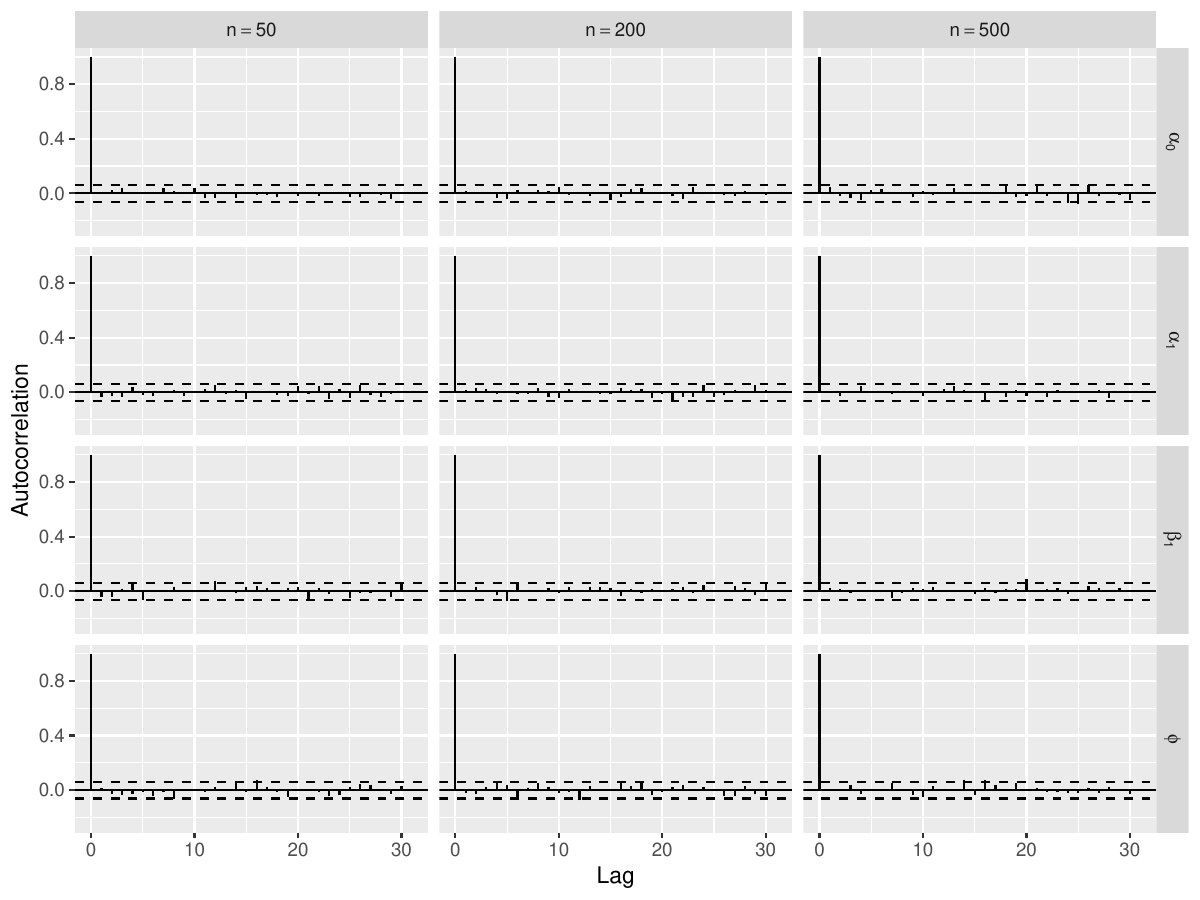}
		\caption{Autocorrelation for parameters, $\alpha_0$, $\alpha_1$ $\beta_1$ and $\phi$ for various sample sizes simulated from parameter configuration (I). }
		\label{fig:3}
	\end{figure}
	\footnotesize{NOTE: The plots are presented for the case of transformed unconstrained parameters. That is, since some of the parameters are constrained, as mentioned in \textcolor{blue}{Subsection} \ref{hmc}, we have applied logit transformation for them to vary along the real line. For example, in the case of Model (I) where $\phi = 0.05$ is bounded between 0 and 1, after applying the transformation, the unconstrained parameter has the value --2.94.}
\subsubsection{Histograms of posterior distributions, traceplots and autocorrelation of parameters for configuration (II)}
	\begin{figure}[H]
		\centering
		\includegraphics[scale=0.4]{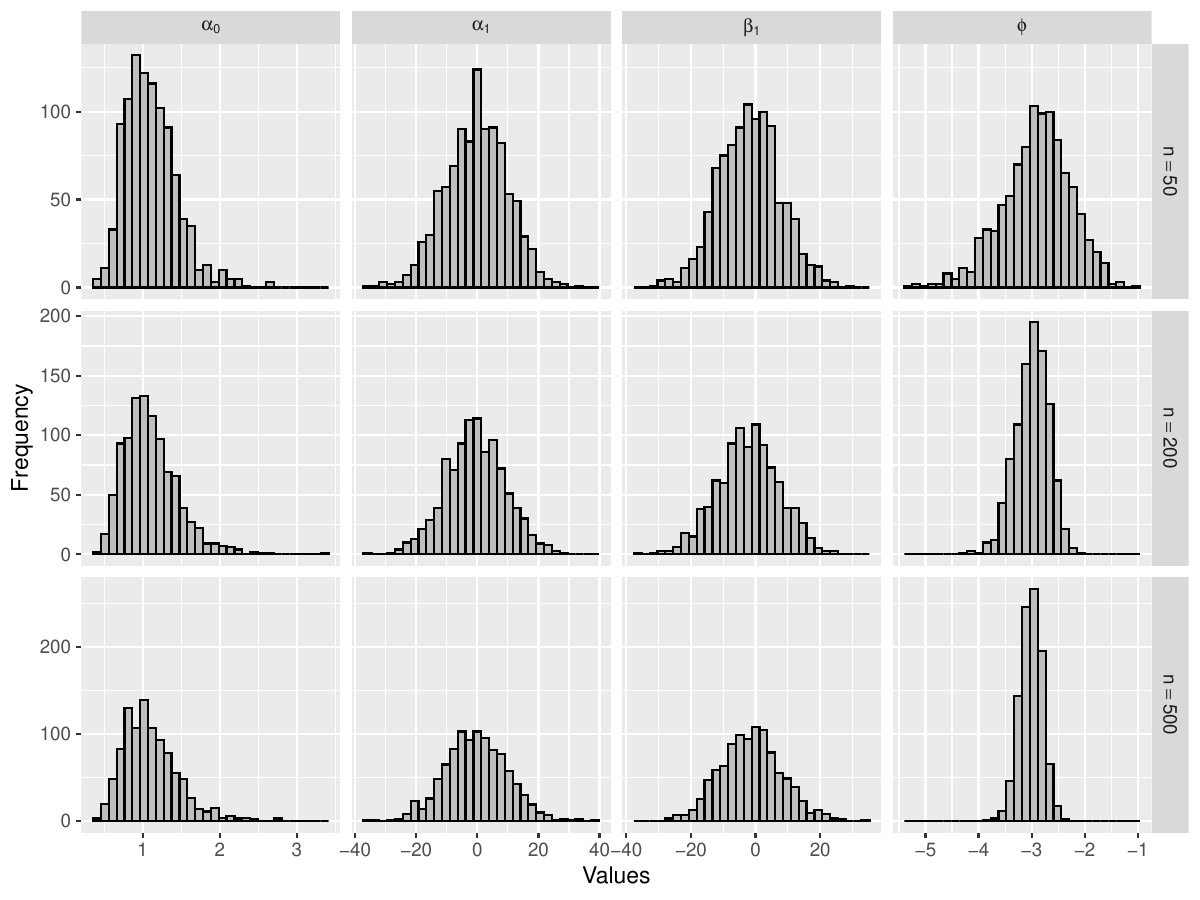}
		\caption{Histograms of posterior distributions for parameters, $\alpha_0$, $\alpha_1$, $\beta_1$ and $\phi$ for various sample sizes simulated from parameter configuration (II).}
		\label{fig:4}
	\end{figure}
	\begin{figure}[H]
		\centering
		\includegraphics[scale=0.5]{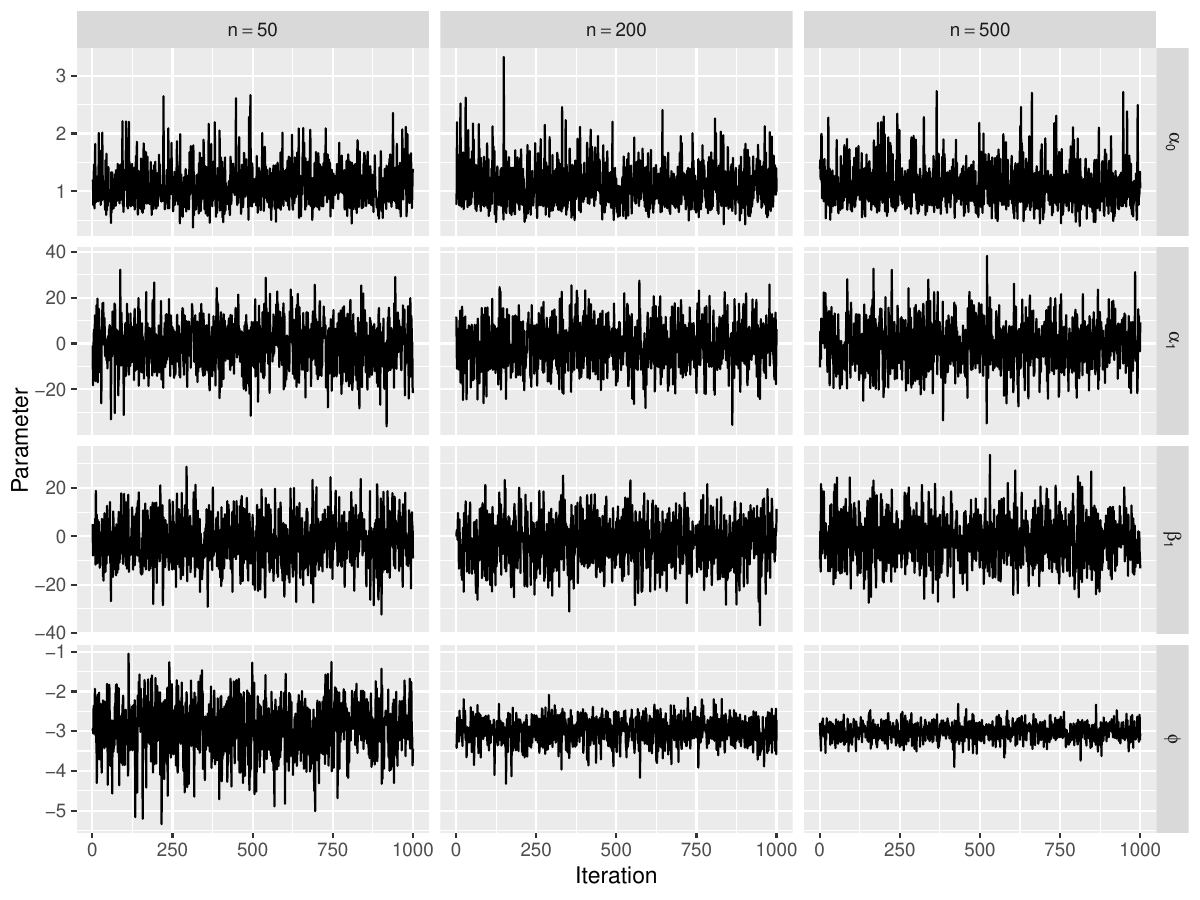}
		\caption{Traceplots for parameters, $\alpha_0$, $\alpha_1$, $\beta_1$ and $\phi$ for various sample sizes simulated from parameter configuration (II). }
		\label{fig:5}
	\end{figure}
	\begin{figure}[H]
		\centering
		\includegraphics[scale=0.5]{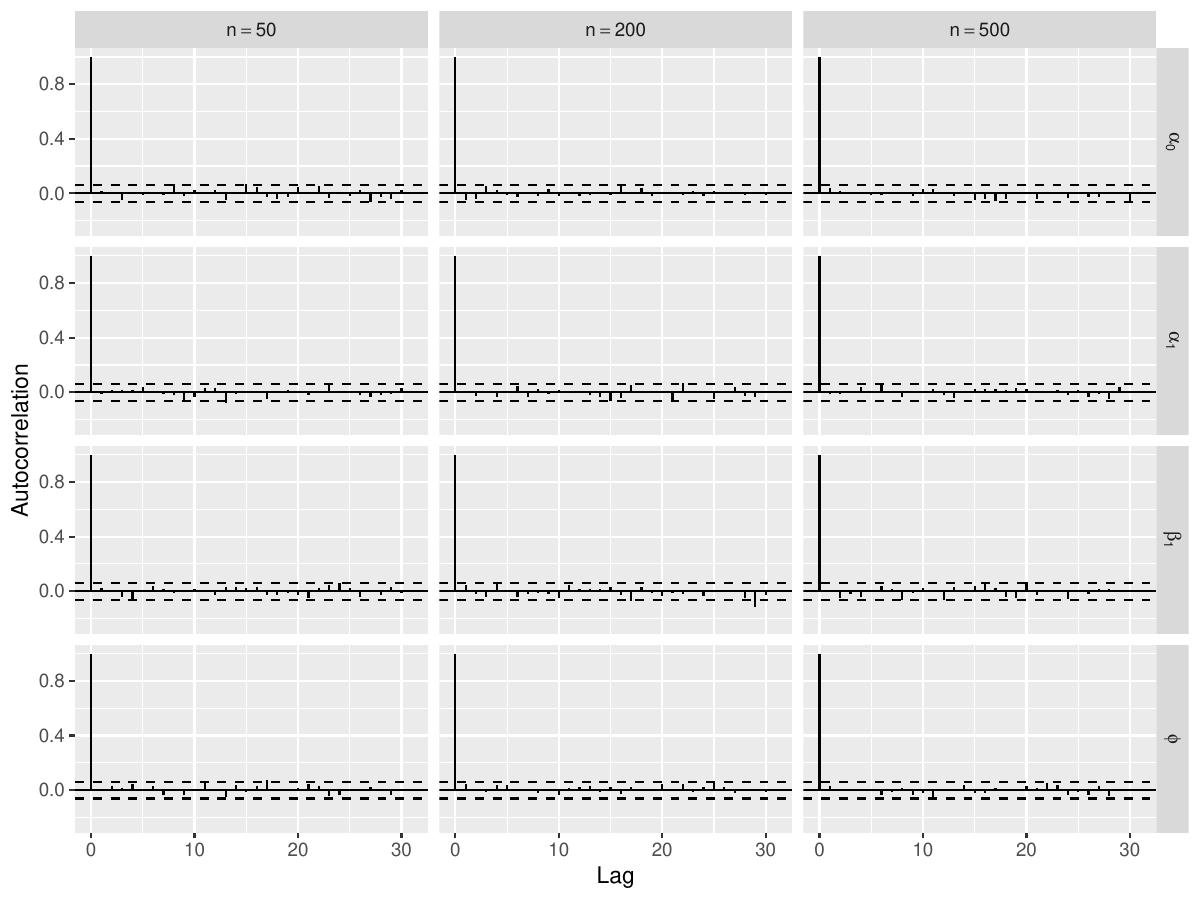}
		\caption{Autocorrelation for parameters, $\alpha_0$, $\alpha_1$, $\beta_1$ and $\phi$ for various sample sizes simulated from parameter configuration (II). }
		\label{fig:6}
	\end{figure}
	\subsubsection{Histograms of posterior distributions, traceplots and autocorrelation of parameters for configuration (III)}
	\begin{figure}[H]
		\centering
		\includegraphics[scale=0.5]{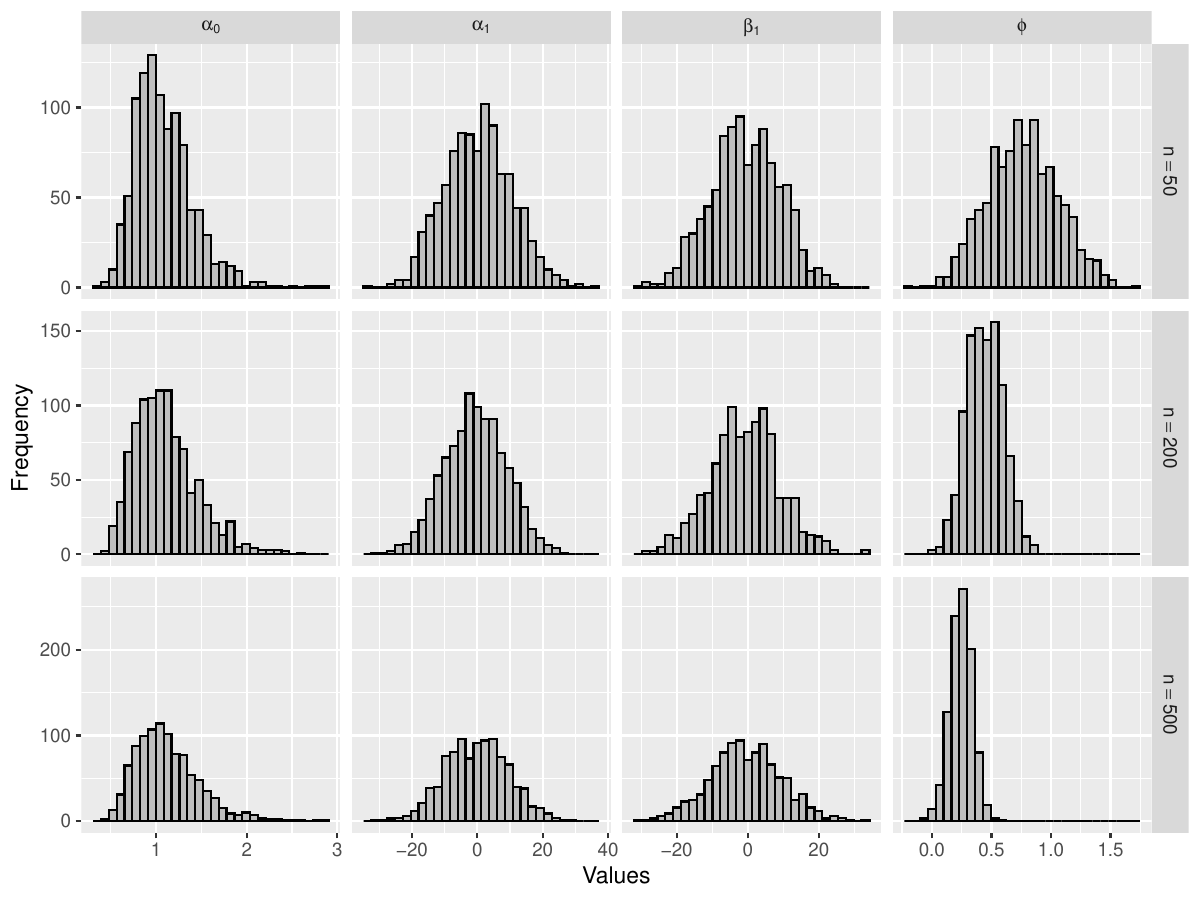}
		\caption{Histograms of posterior distributions for parameters, $\alpha_0$, $\alpha_1$, $\beta_1$ and $\phi$ for various sample sizes simulated from parameter configuration (III). }
		\label{fig:7}
	\end{figure}
	\begin{figure}[H]
		\centering
		\includegraphics[scale=0.5]{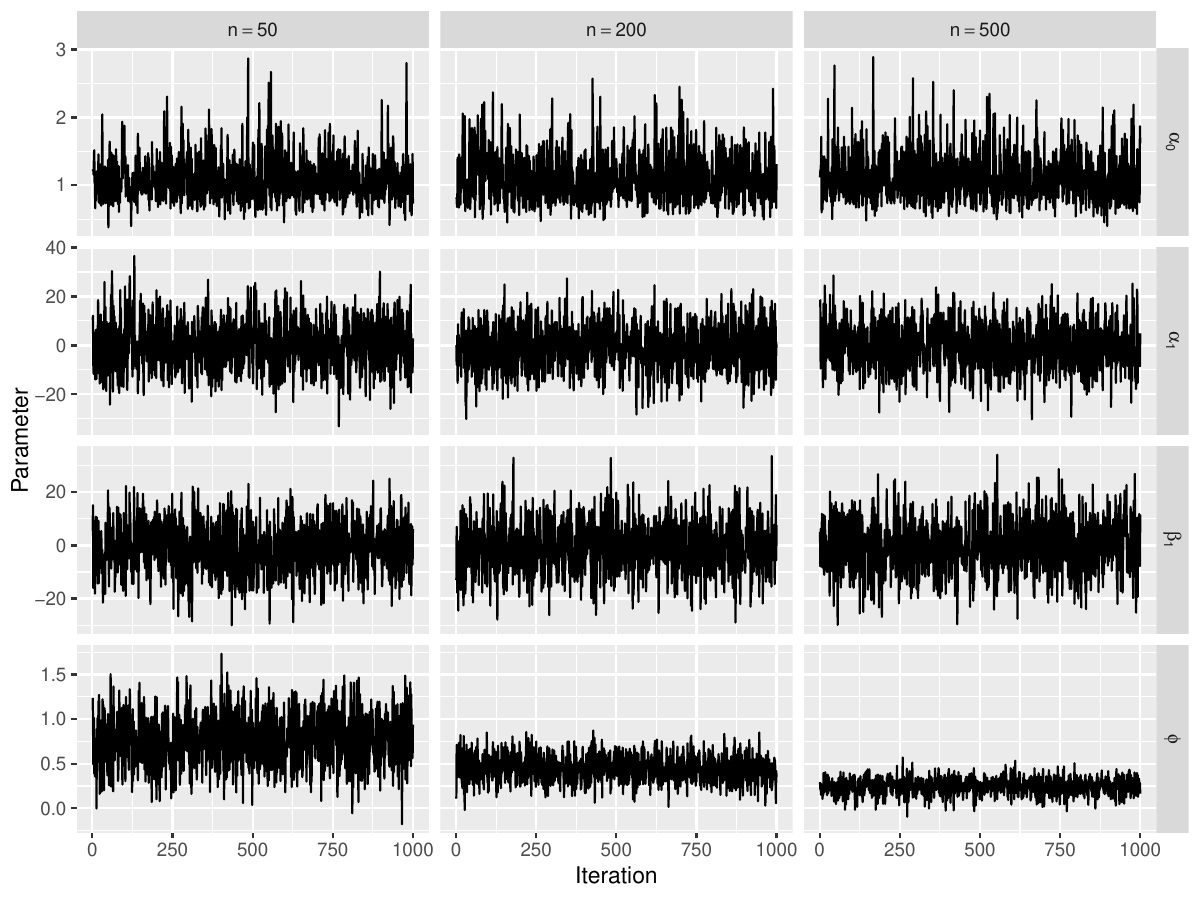}
		\caption{Traceplots for parameters, $\alpha_0$, $\alpha_1$, $\beta_1$ and $\phi$ for various sample sizes simulated from parameter configuration (III).}
		\label{fig:8}
	\end{figure}
	\begin{figure}[H]
		\centering
		\includegraphics[scale=0.5]{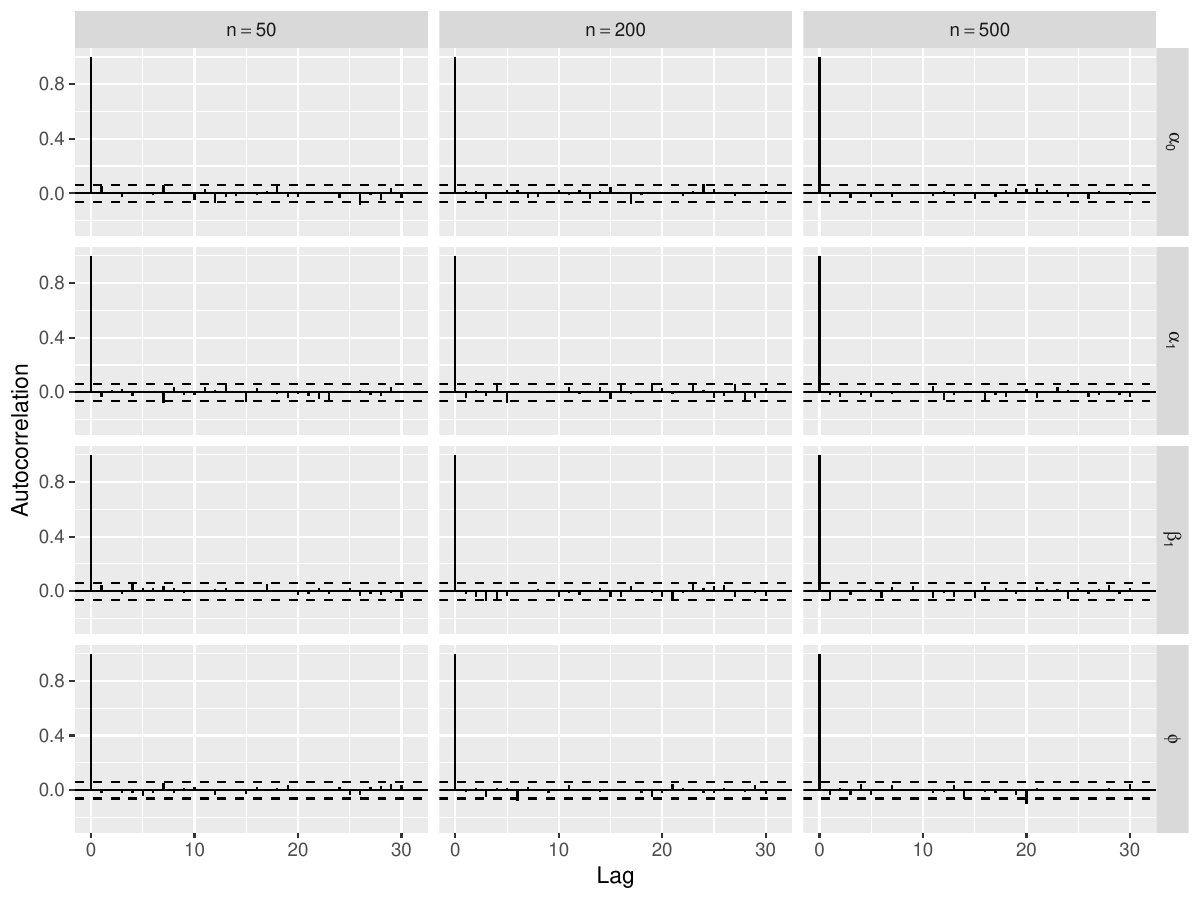}
		\caption{Autocorrelation for parameters, $\alpha_0$,  $\alpha_1$, $\beta_1$ and $\phi$ for various sample sizes simulated from parameter configuration (III). }
		\label{fig:9}
	\end{figure}		
	
		\subsubsection{Histograms of posterior distributions, traceplots and autocorrelation of parameters for configuration (IV)}
	\begin{figure}[H]
		\centering
		\includegraphics[scale=0.5]{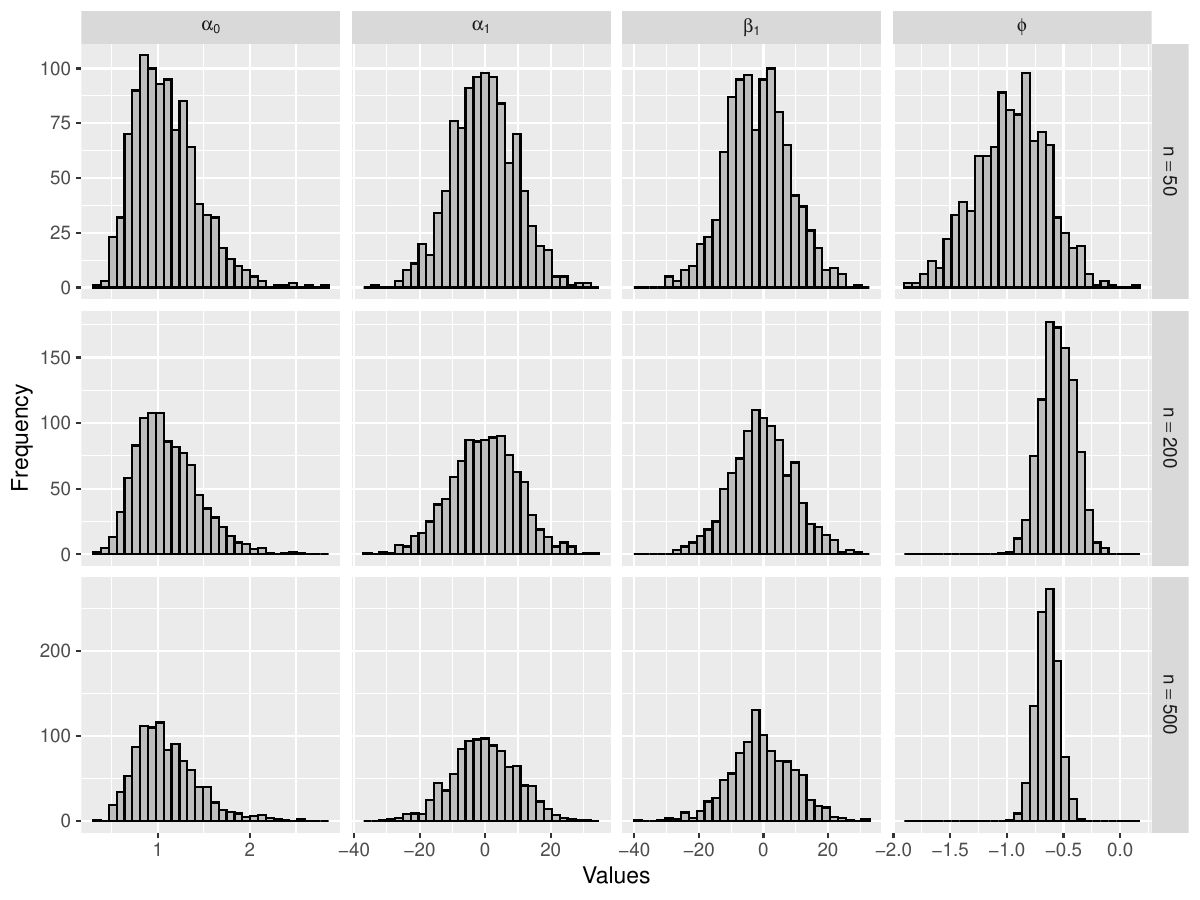}
		\caption{Histograms of posterior distributions for parameters, $\alpha_0$, $\alpha_1$, $\beta_1$ and $\phi$ for various sample sizes simulated from parameter configuration (IV). }
		\label{fig:10}
	\end{figure}
	\begin{figure}[H]
		\centering
		\includegraphics[scale=0.5]{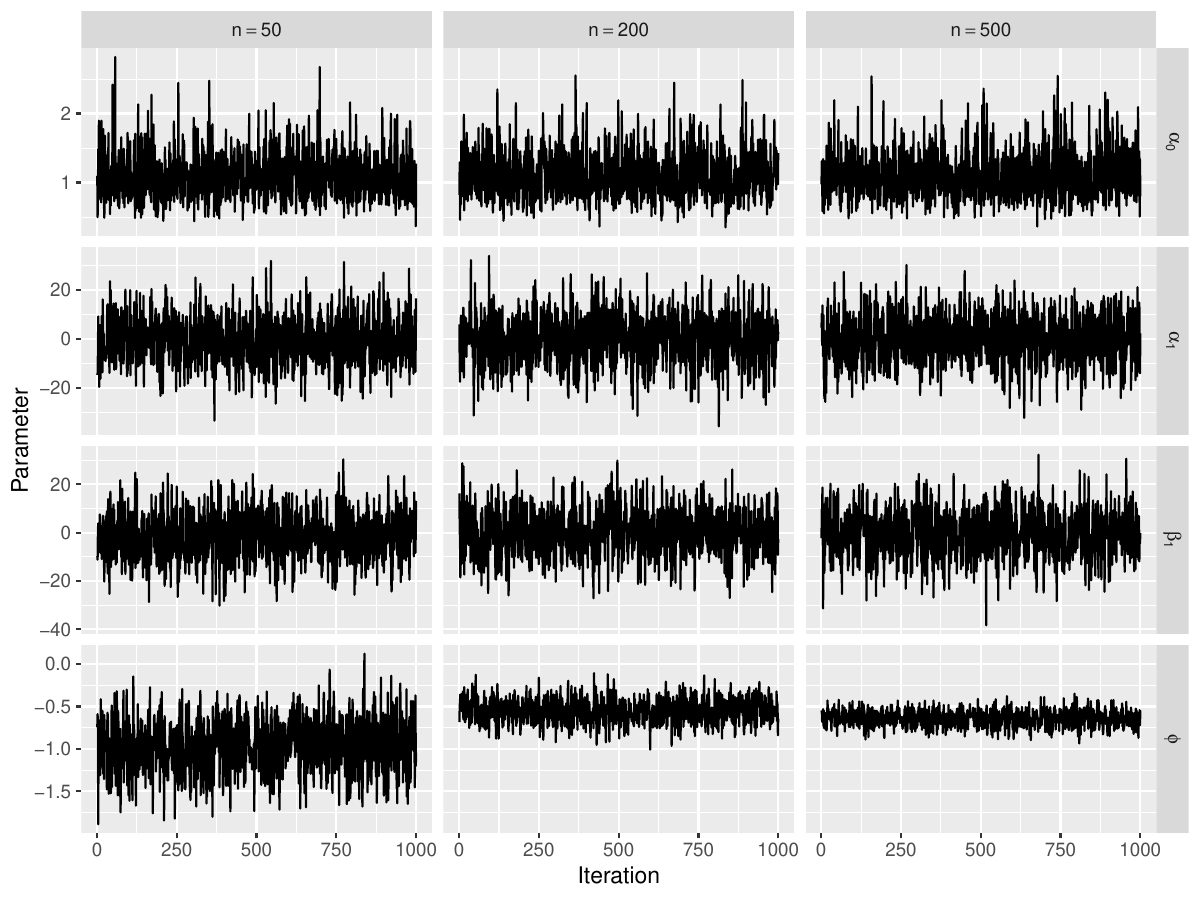}
		\caption{Traceplots for parameters, $\alpha_0$, $\alpha_1$, $\beta_1$ and $\phi$ for various sample sizes simulated from parameter configuration (IV).}
		\label{fig:11}
	\end{figure}
	\begin{figure}[H]
		\centering
		\includegraphics[scale=0.5]{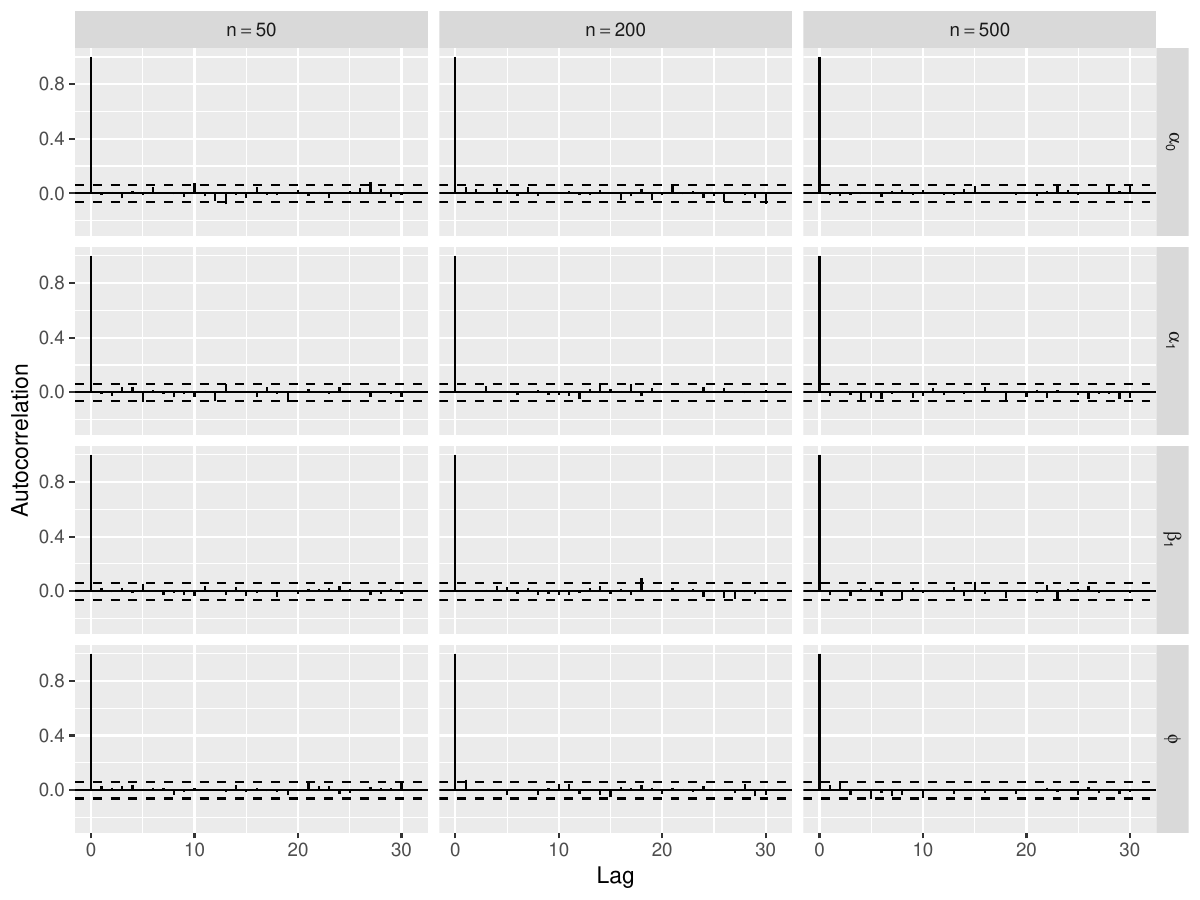}
		\caption{Autocorrelation for parameters, $\alpha_0$,  $\alpha_1$, $\beta_1$ and $\phi$ for various sample sizes simulated from parameter configuration (IV). }
		\label{fig:12}
	\end{figure}		
	
\end{appendices}

\end{document}